%% file: Paper.tex
\documentclass[conference,twocolumn]{IEEEtran}
\IEEEoverridecommandlockouts

\usepackage[utf8]{inputenc} 
\usepackage[T1]{fontenc}
\usepackage{url}
\usepackage{ifthen}
\usepackage[cmex10]{amsmath} % Use the [cmex10] option to ensure complicance
\newboolean{JOURNAL} % set the Boolean variable after the package "ifthen"
\setboolean{JOURNAL}{true} % true for the journal paper
\newboolean{TWOCOLUMN} % set the Boolean variable after the package "ifthen"
\setboolean{TWOCOLUMN}{true} % true for the two-column version
\newboolean{NOTE} % set the Boolean variable for the helpful notes
\setboolean{NOTE}{true} % true for the version with notes
\newboolean{arXiv} % set the Boolean variable for the arXiv/extended version
\setboolean{arXiv}{true} % true for the arXiv version
\usepackage{mleftright}       % fix to wrong spacing of \left-,
\mleftright                   % \middle- \right-commands 
\usepackage{balance}
\usepackage{graphicx}         % provides \includegraphics{...} to
\usepackage{booktabs}         % fixes poor spacing in tables and
\usepackage{array}
\usepackage{comment}
\usepackage[url,hyperrefblack,notheorems,IEEEtran]{research17} % customized LaTeX commands
\usepackage{amsmath,amssymb,amsfonts,amsbsy}
\usepackage[colorinlistoftodos, textsize=footnotesize]{todonotes} % add inline or margin comments

\usepackage{amsthm} % Note that amsthm must be loaded manually AFTER option *notheorems! 
\newtheorem{theorem}{\mytheoremname}
\newtheorem{lemma}{\mylemmaname}
\newtheorem{corollary}{\mycorollaryname}

\newtheorem{proposition}{\mypropositionname}

\newtheorem{definition}{\mydefinitionname}

\usepackage[capitalize]{cleveref}
\usepackage{tikz}
\usetikzlibrary{spy} % <-- for zoom in/out on a part of a curve:
\usepackage{pgfplots}
\allowdisplaybreaks

\newcommand{\Hwt}[1]{\wH\left(#1\right)} % the (Hamming) weight function
\allowdisplaybreaks

\usepackage{array}
\usepackage{multirow}
\usepackage{enumitem} % required for \begin{description}
\usepackage{nicematrix} % for adding vertical lines in a matrix
\usepackage{soul} % hyphenation for letterspacing, underlining, and more
\usepackage[colorinlistoftodos,textsize=footnotesize]{todonotes} % add inline or margin comments
\definecolor{darkgreen}{rgb}{0, 0.60, 0}
\usepackage{bbm}
\usepackage{import}
\usepackage{standalone}
\usepackage{packages/commands}
\usepackage{packages/colors}

\usepackage[caption=false,font=footnotesize]{subfig}
\begin{document}
\title{A Spectrum-Based Converse for Quantum\\ State Discrimination and Its Applications to Classical-Quantum Channel Coding\thanks{This work was supported in part by the Research Council of Norway (RCN) under the NISQEC project (grant no.~$357698$).}}

% %%% Single author, or several authors with same affiliation:

% \ifthenelse{\boolean{JOURNAL}}{
% \author{Tamás~Havas,~\IEEEmembership{Student Member,~IEEE},
% Hsuan-Yin~Lin,~\IEEEmembership{Senior Member,~IEEE},
% Eirik~Rosnes,~\IEEEmembership{Senior Member,~IEEE}
% \thanks{Tamás~Havas, Hsuan-Yin~Lin, and Eirik~Rosnes are with Simula UiB, N--5006 Bergen, Norway. Email: \{tamas, lin, eirikrosnes\}@simula.no.}
% }
% }{
\author{\IEEEauthorblockN{Tam{\'a}s Havas \IEEEauthorrefmark{2}\IEEEauthorrefmark{3}, Hsuan-Yin Lin\IEEEauthorrefmark{2}, and Eirik Rosnes\IEEEauthorrefmark{2}}
\IEEEauthorblockA{\IEEEauthorrefmark{2}Simula UiB, N--5006 Bergen, Norway\\
\IEEEauthorrefmark{3}Department of Informatics, University of Bergen, N--5020 Bergen, Norway\\
Emails: \{tamas, lin, eirikrosnes\}@simula.no}}
% }

%%%%%   
%%%%%   Several authors with up to three affiliations:
%   \author{%
%   \IEEEauthorblockN{Stefan M.~Moser}
%   \IEEEauthorblockA{ETH Zürich\\
%   ISI (D-ITET), ETH Zentrum\\
%   8092 Zürich, Switzerland\\
%   moser@isi.ee.ethz.ch}
%   \and
%   \IEEEauthorblockN{Albus Dumbledore and Harry Potter}
%   \IEEEauthorblockA{Hogwarts School of Witchcraft and Wizardry\\
%   Hogwarts Castle\\ 
%   1714 Hogsmeade, Scotland\\
%   \{dumbledore, potter\}@hogwarts.edu}
% }
%   
%%%%%   
%%%%%   Many authors with many affiliations:
%   \author{%
%   \IEEEauthorblockN{Albus Dumbledore\IEEEauthorrefmark{1},
%   Olympe Maxime\IEEEauthorrefmark{2},
%   Stefan M.~Moser\IEEEauthorrefmark{3}\IEEEauthorrefmark{4},
%   and Harry Potter\IEEEauthorrefmark{1}}
%   \IEEEauthorblockA{\IEEEauthorrefmark{1}%
%   Hogwarts School of Witchcraft and Wizardry,
%   1714 Hogsmeade, Scotland,
%   \{dumbledore, potter\}@hogwarts.edu}
%   \IEEEauthorblockA{\IEEEauthorrefmark{2}%
%   Beauxbatons Academy of Magic,
%   1290 Pyrénées, France,
%   maxime@beauxbatons.fr}
%   \IEEEauthorblockA{\IEEEauthorrefmark{3}%
%   ETH Zürich, ISI (D-ITET), ETH Zentrum, 
%   CH-8092 Zürich, Switzerland,
%   moser@isi.ee.ethz.ch}
%   \IEEEauthorblockA{\IEEEauthorrefmark{4}%
%   National Yang Ming Chiao Tung University (NYCU), 
%   Hsinchu, Taiwan,
%   moser@isi.ee.ethz.ch}
% }
%   

\maketitle

\begin{abstract}
  We investigate converse bounds on the average decoding error probability in finite-blocklength classical-quantum channel coding. We first present a lower bound for multiple quantum hypothesis testing in terms of pairwise trace distances and derive a corresponding fidelity bound. We then obtain a spectrum-based converse that depends only on the a priori probabilities and spectra of the states. We show that the converse bound remains tight for the quantum depolarizing channel under suitable conditions. For codes with product-state outputs, this converse takes an explicit form involving products of output-state eigenvalues. We apply it to binary codes over the quantum amplitude damping channel using the input states $\vert+\rangle$ and $\vert-\rangle$. For this setting, we also discuss a normal approximation to the spectrum-based converse in the large blocklength regime. In all numerical examples considered, the spectrum-based converse is tighter than the other converse bounds at low noise levels.
\end{abstract}% xxx

\section{Introduction} 
\label{sec:introduction}  
% \input{sections/Introduction}
% \todo[inline]{Lin}

In quantum Shannon theory, a fundamental problem is to characterize the reliable transmission of classical messages over a noisy quantum channel. The Holevo--Schumacher--Westmoreland coding theorem characterizes the classical capacity, namely, the maximum rate achievable with vanishing error probability as the number of channel uses tends to infinity~\cite{Holevo98_1, SchumacherWestmoreland97_1}. For a given code length and a fixed number of messages, however, capacity alone does not specify the decoding error probability. This finite-blocklength question is particularly relevant when the number of channel uses is limited~\cite{PolyanskiyPoorVerdu10_1, MatthewsWehner14_1}.

In classical--quantum (CQ) channel coding, a classical code and a set of quantum input states jointly determine the states at the output of a noisy quantum channel. The receiver then performs a measurement to decode the classical message. Our previous work on the quantum amplitude damping channel (ADC) showed that both the code and the measurement affect decoding performance at finite blocklengths~\cite{HavasLinRosnesLai25_1}. In particular, collective measurements can improve performance for coded transmission, whereas uncoded transmission does not benefit from them in that setting. To characterize the fundamental performance limit of a given coding scheme, we therefore seek a lower bound on its average decoding error probability that applies to any measurement at the receiver.

For a fixed set of channel output states, finding the minimum average error probability is a multiple quantum hypothesis testing problem. The optimal measurement can be computed numerically for small systems, but the output dimension grows with the code length. Finite-blocklength converses based on binary hypothesis testing provide general benchmarks for quantum channel coding~\cite{MatthewsWehner14_1, WangFangTomamichel19_1}. Other bounds for multiple-state discrimination use pairwise trace distances or relations between the states' eigenspaces~\cite{Qiu08_1, Li16_1}. For symmetric CQ channels, the connection with generalized perfect codes gives an exact characterization in particular cases~\cite{BlascoCollVazquez-VilarFonollosa22_1}. These approaches generally require the full output states or pairwise relations between them. A converse based only on the spectrum of each output state, i.e., the set of its eigenvalues, would require less information about the states. This raises the question of what lower bound on the average decoding error probability can be obtained using only the a priori probabilities and the output states' spectra.

% \todo[inline, author=Lin]{WIP. Will finish writing the contributions section after I have carefully reviewed the results.}
This work studies converse bounds on the average decoding error probability for CQ channel coding through the corresponding quantum state discrimination problem. Our main contributions are listed as follows.
\begin{enumerate}[label=\roman*),leftmargin=*]
\item We first derive a lower bound on the error probability in terms of pairwise trace distances and a related fidelity bound. The trace-distance bound improves the corresponding bound of Qiu~\cite{Qiu08_1}.
  
\item We obtain a spectrum-based upper bound on the success probability that depends only on the a priori probabilities and the eigenvalues of the states. We give conditions for the bound to be tight and examine when this tightness is preserved under a depolarizing channel. For product-state outputs, the bound can be written in terms of products of single-use eigenvalues and coincides with a sphere-packing bound for an auxiliary classical channel.
  
\item We apply the spectral converse to binary codes over the quantum amplitude damping channel with input states $\lvert+\rangle$ and $\lvert-\rangle$. Their output spectra yield an explicit finite-blocklength bound. We also discuss a normal approximation to the bound as the code length grows. For the repetition, single-parity-check, and shortened Hamming codes considered numerically, the spectral converse gives a tighter bound than the other converses at low noise levels.
\end{enumerate}
% xxx

\section{Preliminaries and Setup} 
\label{sec:preliminaries}
\subsection{Notation} 
\label{sec:notation}

We denote the sets of natural numbers, integers, rationals, reals, and the complex numbers as $\Nbb$, $\Zbb$, $\Qbb$, $\Rbb$, and $\Cbb$, respectively. The imaginary unit is denoted by $\ieu$. For $i,j \in \Zbb$ with $i \leq j$, let $[i:j] \eqdef \{i, i+1, \ldots, j\}$. We use the Dirac delta notation $\delta_{ij} = 1$ if $i=j$ and $\delta_{ij} = 0$ for $i \neq j$.

\emph{Column} vectors are represented by boldfaced letters, e.g., $\abm$ or $\abm_i$, and their entries with normal sized letters $a_j$ or $a_{i,j}$,  respectively. The all-one vector is denoted by $\vect{1}$, and the context often implies the dimension. Matrices are represented by capital sans-serif letters and sets by calligraphic uppercase letters, respectively, e.g., $\Xsf$ and $\Xcal$. The identity matrix and the all-zero matrix are denoted by $\Isf$ and $\Osf$, respectively. 

An $(n,\Meu)_q$ or $(n,\Meu,d)_q$ classical code $\Cscr$ (linear or nonlinear) over a finite field $\Fbb_q$ is defined as a subset of $\Fbb_q^n$ containing $\const{M}$ codewords and having a minimum distance $d$. If the code $\Cscr$ is linear with dimension $k$, it is referred to as an $[n,k]_q$ or $[n,k,d]_q$ code $\Cscr$.
We denote the \emph{Hamming weight} of a vector $\xbm \in \Fbb_q^n$ by $\Hwt{\xbm}$ and use $\trans{(\cdot)}$ to represent the transpose of its argument.

The set of linear operators over an $\Delta$-dimensional Hilbert space $\Hcal_{\Delta}$ is denoted by $\Lscr(\Hcal_{\Delta})$, the subscript may be omitted if not significant. The set of density operators is written as $\Dscr(\Hcal) \subseteq \Lscr(\Hcal)$ and operators are denoted by lower case Greek letters. The Hermitian conjugate of an argument is represented by $\hermi{(\cdot)}$. For $i \in [0:\Delta-1]$, $\ket{i}$ represents the standard basis of an $\Delta$-dimensional Hilbert space. A series of tensor products is denoted by
\begin{IEEEeqnarray*}{c}
  \bigotimes_{j \in [1:n]} \Asf_j \eqdef \Asf_1 \otimes \ldots \otimes \Asf_n,
\end{IEEEeqnarray*}
and $\Asf^{\otimes n} = \bigotimes_{j \in [1:n]} \Asf$ is the $n$-fold tensor product of an operator with itself. The tensor product operations can be applied to operators, channels, and Hilbert spaces. The spectrum of an operator $\Asf$ is denoted by $\tn{spec}(\Asf)$, and it is a non-increasing vector unless specifically stated otherwise.

Any Hermitian operator $\Hsf$ can be written as $\Hsf = \Asf - \Bsf$ for some $\Asf, \Bsf$ orthogonal positive semi-definite operators in the same space, then we define the operator absolute value as $\absol{\Hsf} \eqdef \Asf + \Bsf$. The projector onto the support of $\Asf$ is denoted by $\brace{\Hsf}_+$ and onto the kernel of $\Hsf$ by $\brace{\Hsf}_0$. We adopt the common Dirac notation throughout this work, where $\ket{\psi}$ denotes a pure quantum state expressed as a column vector. For example, the spectral decomposition of a Hermitian operator $\Hsf$ can be written as $\Hsf = \sum_{a \in \tn{spec}(\Hsf)} a \cdot \ketbra{a}{a}$, where $\tn{spec}(\Hsf)$ is the spectrum and the eigenvalues label the unit eigenvectors. The squared fidelity between density operators is $F(\varphi, \rho)\eqdef\operatorname{tr}^2\bigl(\eabs{\sqrt{\varphi} \sqrt{\rho}}\bigr)$ and its square root is denoted by $\sqrt{F}(\Asf,\Bsf)$, and the binary entropy function is $\Hb$.

% === This helps Overleaf to understand how to collapse subsections === %

\subsection{Quantum Hypothesis Testing (QHT)}
\label{sec:QHT}
We look at CQ communication, where we represent classical information $[1:\Meu]$ with a set of density operators $\Scal = \brace{\varphi_1, \dots, \varphi_\Meu}$ subject to a noisy quantum channel $\Ncal$ defined by its Kraus operators $\Kcal_\Ncal$ as
\begin{equation*}
  \Ncal(\varphi) = \sum_{\Ksf \in \Kcal_\Ncal} \Ksf \varphi \hermi{\Ksf},
\end{equation*}
where $\sum_{\Ksf \in \Kcal_\Ncal} \hermi{\Ksf} \Ksf = \Isf$. We will denote the action of a noisy quantum channel on a set as a composition
\begin{equation*}
  \Ncal \circ \Scal = \brace{\Ncal(\varphi_1), \dots, \Ncal(\varphi_\Meu)}.
\end{equation*}
A measurement is described by a set of non-negative operators $\Dcal = \brace{\Lambda_1, \dots, \Lambda_\Meu}$ called a positive-operator valued measure (POVM), such that $\sum_{\Lambda \in \Dcal} \Lambda = \Isf$ sums to the identity. The success probability of discriminating between the elements of a set $\Scal$ using a measurement $\Dcal$ with a priori distribution $\pbm$ is given by the Born rule
\begin{equation*}
  P_\ctn(\pbm, \Scal, \Dcal) = \sum_i p_i \trace{\varphi_i \Lambda_i}.
\end{equation*}
With some abuse of notation, whenever we omit an argument, we imply an optimization over it, e.g.,
\begin{equation*}
  P_\ctn(\Scal) = \sup_{\pbm, \Dcal} P_\ctn(\pbm, \Scal, \Dcal).
\end{equation*}
We are interested in the questions: What is the optimal success probability/measurement under the uniform a priori $\nicefrac{\bone}{\Meu}$? I.e.,
\begin{IEEEeqnarray*}{c}
  P_\ctn\paren{\frac{\bone}{\Meu}, \Ncal \circ \Scal} , \quad \Dcal^* = \argmin_{\Dcal} P_\ctn\paren{\frac{\bone}{\Meu}, \Ncal \circ \Scal, \Dcal}.
\end{IEEEeqnarray*}
Alternatively, we will sometimes use the error probability $P_\etn(\cdot) = 1 - P_\ctn(\cdot)$.

The next theorem gives the exact error probability.
\begin{lemma}[{\cite[Th.~1]{BlascoCollVazquez-VilarFonollosa22_1}}] \label{lem:mqht_to_bqht}
  The error probability of a discrimination problem is given by
  \begin{equation}
    \label{eq:bqht_max}
    P_\etn(\pbm, \Scal) = \max_{\mu \in \Dcal(\Hcal)} \alpha_{\frac{1}{\Meu}} \paren{\varphi \middle|\middle| \frac{\Isf}{\Meu} \otimes \mu},
  \end{equation}
  where $\varphi = \diag{p_1 \varphi_1, \dots, p_\Meu \varphi_\Meu}$ and $\alpha$ is the optimal type I error for a binary quantum hypothesis testing problem
  \begin{equation}
    \label{eq:bqht_type_I}
    \alpha_\beta (\psi_0 || \psi_1) = \inf_{\substack{0 \leq \Lambda_0 \leq \Isf \\ \trace{\psi_1 \Lambda_0} \leq \beta}} 1 - \trace{\psi_0 \Lambda_0}.
  \end{equation}
\end{lemma}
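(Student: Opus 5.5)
We prove the two inequalities $P_\etn(\pbm,\Scal)\ge \alpha_{\frac1\Meu}\bigl(\varphi\big\|\frac{\Isf}{\Meu}\otimes\mu\bigr)$ for every density operator $\mu$, and equality for some $\mu^\star$. Throughout we identify the block-diagonal $\varphi=\diag{p_1\varphi_1,\dots,p_\Meu\varphi_\Meu}$ with the classical-quantum state $\sum_i \ketbra{i}{i}\otimes p_i\varphi_i$ on $\Cbb^\Meu\otimes\Hcal$. Then $\frac{\Isf}{\Meu}\otimes\mu=\sum_i\frac1\Meu\ketbra{i}{i}\otimes\mu$ is a genuine density operator.

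\emph{Step 1 (meta-converse direction).} Fix any POVM $\Dcal=\{\Lambda_1,\dots,\Lambda_\Meu\}$ and form the test $\Lambda_0=\sum_i\ketbra{i}{i}\otimes\Lambda_i$, which satisfies $0\le\Lambda_0\le\Isf$. Its type-II quantity is
\[
\trace{(\tfrac{\Isf}{\Meu}\otimes\mu)\Lambda_0}=\tfrac1\Meu\sum_i\trace{\mu\Lambda_i}=\tfrac1\Meu\trace{\mu}=\tfrac1\Meu,
\]
so $\Lambda_0$ is feasible in \eqref{eq:bqht_type_I} with $\beta=\frac1\Meu$. Its type-I error is $1-\sum_ip_i\trace{\varphi_i\Lambda_i}=P_\etn(\pbm,\Scal,\Dcal)$. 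Taking the infimum over $\Dcal$ gives $P_\etn(\pbm,\Scal)\ge\alpha_{\frac1\Meu}(\varphi\|\frac{\Isf}{\Meu}\otimes\mu)$ for every $\mu$, and hence $P_\etn\ge\max_\mu\alpha$.

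\emph{Step 2 (achievability of equality via duality).} This is the main step. We write $\alpha_\beta$ as a semidefinite program and dualize; strong duality holds by Slater's condition, using the strictly feasible point $\Lambda_0=\frac12\Isf$. The dual reads
\[
\alpha_\beta(\psi_0\|\psi_1)=\sup_{\lambda\ge0,\;\Ysf\ge0}\Bigl\{1-\trace{\Ysf}-\lambda\beta \;:\; \Ysf\ge\psi_0-\lambda\psi_1\Bigr\}.
\]
For the error probability itself we use the standard SDP dual of multiple state discrimination:
\[
P_\ctn(\pbm,\Scal)=\min\bigl\{\trace{\Zsf}:\ \Zsf\ge p_i\varphi_i\ \forall i\bigr\}.
\]
Let $\Zsf^\star$ be optimal and set $\mu^\star=\Zsf^\star/\trace{\Zsf^\star}$. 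In the dual of $\alpha$ for the pair $(\varphi,\frac{\Isf}{\Meu}\otimes\mu^\star)$, choose $\lambda=\Meu\,\trace{\Zsf^\star}$ and $\Ysf=0$. Feasibility requires $\varphi\le\lambda\frac{\Isf}{\Meu}\otimes\mu^\star=\Isf\otimes\Zsf^\star$, which holds blockwise since $p_i\varphi_i\le\Zsf^\star$. The dual value is then $1-\lambda/\Meu=1-\trace{\Zsf^\star}=P_\etn(\pbm,\Scal)$. Therefore $\alpha_{\frac1\Meu}(\varphi\|\frac{\Isf}{\Meu}\otimes\mu^\star)\ge P_\etn(\pbm,\Scal)$, and combined with Step 1 the maximum is attained at $\mu^\star$.

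The obstacle is justifying the two SDP dualities, namely the dual form of $\alpha_\beta$ and the characterization $P_\ctn=\min\trace{\Zsf}$. We plan to derive both from Lagrangian duality, with Slater's condition supplying strong duality in each case; these are standard (Yuen--Kennedy--Lax, Holevo). Since only the weak-duality inequality of the $\alpha$-dual is needed here, the only real requirement is the strong duality for $P_\ctn$, which ensures that $\Zsf^\star$ exists and that $\trace{\Zsf^\star}=P_\ctn$.
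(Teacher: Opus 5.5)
Your proof is correct. The paper gives no argument of its own for this lemma; it simply imports it from the cited work, so your proof has to stand on its own, and it does.

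\textbf{Step~1.} This is the standard meta-converse.

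\textbf{Step~2.} This is the standard tightness argument. Take $\mu^\star=\mathsf{Z}^\star/\operatorname{tr}\mathsf{Z}^\star$, built from the optimizer of the Holevo--Yuen--Kennedy--Lax dual. When the states commute, this is the classical choice of taking $\mu$ proportional to $\max_i p_i\varphi_i$ entrywise in a common eigenbasis. The pair $\lambda=\Meu\operatorname{tr}\mathsf{Z}^\star$, $\mathsf{Y}=0$ is dual feasible because $p_i\varphi_i\le\mathsf{Z}^\star$ block by block. Weak duality then yields $\alpha_{1/\Meu}\bigl(\varphi\big\|\tfrac{\Isf}{\Meu}\otimes\mu^\star\bigr)\ge 1-\operatorname{tr}\mathsf{Z}^\star=P_\etn(\pbm,\Scal)$.

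\textbf{A side remark that is wrong but harmless.} $\tfrac12\Isf$ is not a Slater point for the $\alpha_\beta$ program at $\beta=1/\Meu$. Its type-II value is $\tfrac12\ge\tfrac1\Meu$, so it is not strictly feasible, and it is even infeasible for $\Meu\ge 3$. The point $\tfrac{\beta}{2}\Isf$ would work instead. As you observe, only weak duality of the $\alpha$-dual is used, and that holds unconditionally, so nothing breaks.

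\textbf{What you genuinely need is in order.}
First, zero duality gap and attainment for $\min\{\operatorname{tr}\mathsf{Z}:\mathsf{Z}\ge p_i\varphi_i\ \forall i\}$. These hold because both $\Lambda_i=\Isf/\Meu$ and $\mathsf{Z}=2\Isf$ are strictly feasible.
Second, $\mathsf{Z}^\star\ge 0$ and $\operatorname{tr}\mathsf{Z}^\star\ge\max_i p_i>0$, so that $\mu^\star$ is a genuine state.
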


Quasi-perfect codes achieve the optimal error probability for the parameters they have. \footnote{We use the weaker definition presented in the arXiv version~\cite{BlascoCollVazquez-VilarFonollosa20_1sub}, not in the published version~\cite{BlascoCollVazquez-VilarFonollosa22_1}. By this definition, we are describing an equiprobable channel coding setup, but we can generalize it to the operators $p_i \varphi_i$.
}
\begin{definition}[Quasi-Perfect Code (QPC)]
  \label{def:quasi-perfect_code}
  A set $\brace{\varphi_m}_{m \in \Meu} \subset \Dscr(\Hcal)$ forms a quasi-perfect code, if there exist $\mu \in \Dscr(\Hcal)$ and $t \in \Rbb$, such that the projectors $\brace{\varphi_m - t\mu}_+$ are pair-wise orthogonal and
  \begin{equation*}
    \sum_{m \in [1:\Meu]} \brace{\varphi_m - t\mu}_+ + \brace{\varphi_m - t\mu}_0 \geq \Isf.
  \end{equation*}
\end{definition}
Once the operator $t \mu$ is known, calculating the error probability and the corresponding measurement reduces to the eigenvalue problem of the operators $\varphi_m - t\mu$. However, finding $t \mu$ is as hard as solving the original optimization problem in general. As a result, we are not considering QPCs in this paper.

% === This helps Overleaf to understand how to collapse subsections === %

\subsection{Channel Coding over Classical-Quantum Channels}
\label{sec:classical_coding}

Classical channel coding for $n$ uses of a noisy quantum channel $\Ncal^{\otimes n}$ is a set of density operators $\brace{\varphi_m}_{m \in [1:\Meu]} \subset \Hcal^{\otimes n}$. We are interested in the case where the channel inputs are defined by a classical error-correcting code $\Cscr$ with parameters $(n, \Meu)_q$. First, let us fix a set of input symbols $\Scal = \brace{\rho_i}_{i \in [0:q-1]}$, then the channel input for message $m \in [1:\Meu]$ is given by
\begin{equation*}
  \varphi_m \eqdef \bigotimes_{l \in [1:n]} \rho_{x_{m,l}},
\end{equation*}
where $\xbm_m \in \Cscr$ is the code word corresponding to message $m$. In this case, the channel outputs are calculated as a tensor product
\begin{equation*}
  \Ncal^{\otimes n} (\varphi_m) = \bigotimes_{l \in [1:n]} \Ncal(\rho_{x_{m,l}}) = (\Ncal \circ \Scal)_{\xbm_m}.
\end{equation*}
For the set of density operators generated by a coding scheme, we use the notation $\Scal_\Cscr$, so the noisy channel outputs are denoted by
\begin{equation*}
  (\Ncal \circ \Scal)_\Cscr.
\end{equation*}

% === This helps Overleaf to understand how to collapse subsections === %

\subsection{QHT Bounds in the Literature}

Usually, the interesting use-cases have code lengths $n$ that are too large to consider for the optimization in \cref{lem:mqht_to_bqht} or the related semi-definite programming, because the dimension of the Hilbert space $N = q^n$ grows exponentially in $n$. As we will see the explicitly defined bounds in the literature use trace distance because it is closely related to measurements. However, this metric requires the calculation of the spectra of $N$-by-$N$ operators. To counter this, we will turn these bounds (that we will call trace-distance bounds) into bounds in terms of fidelity $F$, we will refer to these as fidelity bounds. The main idea behind this flip is the multiplicativity of $F$
\begin{equation} \label{eq:multiplicative_fidelity}
    F(\rho_0 \otimes \rho_1, \psi_0 \otimes \psi_1) = F(\rho_0, \psi_0) \cdot F(\rho_1, \psi_1).
\end{equation}
This paper focuses on and compares fidelity bounds and only mentions the trace bound that we converted to fidelity-based ones. To switch between the two distances, we will use the following lemma.
\begin{lemma}
  \label{lem:fidelity_vs_trace}
  For any two operators $\Lscr(\Hcal) \ni \Asf, \Bsf \geq 0$ with traces $a$ and $b$, respectively, if $a, b \leq 1$ then
  \begin{IEEEeqnarray*}{rCl}
    \frac{\max(a,b)}{2} - \sqrt{F}(\Asf, \Bsf) & \leq &\frac{\trace{\eabs{\Asf - \Bsf}}}{2}
    \\
    & \leq &\sqrt{\paren{\frac{a+b}{2}}^2 - F(\Asf, \Bsf)}.
  \end{IEEEeqnarray*}
\end{lemma}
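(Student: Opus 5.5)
The plan is to follow the standard proof of the Fuchs--van de Graaf inequalities, adapted to subnormalized positive operators. First, note that by homogeneity $\sqrt{F}(c\Asf, c'\Bsf) = \sqrt{cc'}\,\sqrt{F}(\Asf,\Bsf)$. So Uhlmann's theorem and the Fuchs--Caves measurement characterization of fidelity carry over from density operators to arbitrary $\Asf,\Bsf\geq 0$: when $a,b>0$, apply them to $\Asf/a$ and $\Bsf/b$ and rescale. The degenerate cases $a=0$ or $b=0$ are trivial and are checked directly. I do not expect the hypothesis $a,b\leq 1$ to be needed in an essential way.

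\emph{Lower bound.} By the Fuchs--Caves result there is a POVM $\{\Esf_x\}$ with $\sqrt{F}(\Asf,\Bsf)=\sum_x \sqrt{a_x b_x}$, where $a_x=\trace{\Asf\Esf_x}$ and $b_x=\trace{\Bsf\Esf_x}$, so that $\sum_x a_x=a$ and $\sum_x b_x=b$. Since measurement does not increase the trace norm, $\trace{\eabs{\Asf-\Bsf}}\geq\sum_x\eabs{a_x-b_x}$. Each term then satisfies
\begin{equation*}
\eabs{a_x-b_x}=\eabs{\sqrt{a_x}-\sqrt{b_x}}\,(\sqrt{a_x}+\sqrt{b_x})\geq(\sqrt{a_x}-\sqrt{b_x})^2 .
\end{equation*}
Summing over $x$ gives $\trace{\eabs{\Asf-\Bsf}}\geq a+b-2\sqrt{F}(\Asf,\Bsf)$. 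Because $(a+b)/2\geq\max(a,b)/2$, this implies the claimed left inequality, and is in fact slightly stronger.

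\emph{Upper bound.} By Uhlmann's theorem, applied after rescaling, there are vectors $\ket{u},\ket{v}$ on an extended space $\Hcal\otimes\Hcal'$ with $\operatorname{tr}_{\Hcal'}\ketbra{u}{u}=\Asf$ and $\operatorname{tr}_{\Hcal'}\ketbra{v}{v}=\Bsf$. They satisfy $\langle u|u\rangle=a$, $\langle v|v\rangle=b$, and $\eabs{\langle u|v\rangle}^2=F(\Asf,\Bsf)$. The partial trace is trace-norm contractive, so $\trace{\eabs{\Asf-\Bsf}}\leq\trace{\eabs{\ketbra{u}{u}-\ketbra{v}{v}}}$. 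The operator $\ketbra{u}{u}-\ketbra{v}{v}$ has rank at most two and is supported on $\operatorname{span}\{\ket{u},\ket{v}\}$. Its two nonzero eigenvalues $\lambda_\pm$ satisfy $\lambda_++\lambda_-=a-b$ and $\lambda_+\lambda_-=-(ab-F)\leq 0$. Hence
\begin{equation*}
\lambda_+-\lambda_-=\sqrt{(a-b)^2+4(ab-F)}=\sqrt{(a+b)^2-4F}.
\end{equation*}
Dividing by two gives the right inequality.

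The main point requiring care is the product of the eigenvalues. Writing the operator in an orthonormal basis of $\operatorname{span}\{\ket{u},\ket{v}\}$, the $2\times 2$ determinant equals $-(\langle u|u\rangle\langle v|v\rangle-\eabs{\langle u|v\rangle}^2)$. This quantity is nonpositive by Cauchy--Schwarz, so the two eigenvalues have opposite signs, and therefore the trace norm equals $\lambda_+-\lambda_-$. The rest is routine once the rescaled forms of Uhlmann's theorem and of the Fuchs--Caves measurement are in hand.
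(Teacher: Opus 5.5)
Your proof is correct, and it takes a different route from the paper, which gives no argument of its own and only points to (A.11) in the supplementary material of \cite{ChengDatta-etal25_1}. You derive the Fuchs--van de Graaf inequalities for subnormalized operators directly. The left side comes from an optimal Fuchs--Caves measurement, monotonicity of the trace norm, and $\eabs{a_x-b_x}\geq(\sqrt{a_x}-\sqrt{b_x})^2$. The right side comes from rescaled Uhlmann purifications, contractivity of the partial trace, and the rank-two identity $\det(\ketbra{u}{u}-\ketbra{v}{v})=-(ab-\eabs{\langle u|v\rangle}^2)$.

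The citation buys brevity. Your route buys transparency and exposes two facts the statement hides. First, the left inequality holds with $\frac{a+b}{2}$ in place of $\frac{\max(a,b)}{2}$. Second, the hypothesis $a,b\leq 1$ is superfluous, since every term is positively homogeneous of degree one under $(\Asf,\Bsf)\mapsto(c\Asf,c\Bsf)$. The same homogeneity, $F(p_i\varphi_i,p_j\varphi_j)=p_ip_jF(\varphi_i,\varphi_j)$, is what turns your right inequality into the square roots in \cref{cor:Qiu_fidelity_bound} and \cref{cor:Gemini_fidelity_bound}. Those are the only places the lemma is used, and only the right inequality is needed there.

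If you prefer not to rely on attainment in the Fuchs--Caves theorem, a measurement achieving $\sqrt{F}(\Asf,\Bsf)+\varepsilon$ suffices. Alternatively, you can avoid measurements altogether using the Powers--St{\o}rmer inequality $\trace{\eabs{\Asf-\Bsf}}\geq\trace{(\sqrt{\Asf}-\sqrt{\Bsf})^2}=a+b-2\,\trace{\sqrt{\Asf}\sqrt{\Bsf}}$, combined with $\trace{\sqrt{\Asf}\sqrt{\Bsf}}\leq\sqrt{F}(\Asf,\Bsf)$.
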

\begin{IEEEproof}
  See (A.11) in the supplementary material of~\cite{ChengDatta-etal25_1}.
\end{IEEEproof}
    
The state-of-the-art when it comes to converse bounds on multiple QHT is from Qiu~\cite{Qiu08_1} and Li~\cite{Li16_1}. The following trace-distance bound on the error probability is from~\cite{Qiu08_1}.
\begin{theorem}[{\cite[Th.~1]{Qiu08_1}}]
  \label{thm:Qiu_trace_bound}
  Given a discrimination problem between the quantum states $\Scal = \brace{\varphi_1, \dots, \varphi_\Meu} \subset \Dscr(\Hcal)$ with a priori distribution $\pbm$, the error probability is bounded by
  \begin{IEEEeqnarray*}{rCl}
    P_\etn(\pbm, \Scal)& \geq &\frac{1}{2} - \frac{1}{2 (\Meu - 1)}\sum_{1 \leq i < j \leq \Meu}\trace{\eabs{p_i \varphi_i - p_j \varphi_j}}.
  \end{IEEEeqnarray*}
\end{theorem}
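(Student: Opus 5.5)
The plan is to bound the success probability $P_\ctn(\pbm,\Scal,\Dcal)=\sum_i p_i\trace{\varphi_i\Lambda_i}$ uniformly over all POVMs $\Dcal=\brace{\Lambda_1,\dots,\Lambda_\Meu}$. We split it into pairwise two-hypothesis contributions and bound each pair by a Helstrom-type argument. Taking the supremum over $\Dcal$ and passing to $P_\etn=1-P_\ctn$ then gives the statement.

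\emph{Step 1 (pair decomposition).} Each index $i$ appears in exactly $\Meu-1$ unordered pairs $\{i,j\}$. Hence
\begin{equation*}
  (\Meu-1)\,P_\ctn(\pbm,\Scal,\Dcal)=\sum_{1\le i<j\le \Meu}\Bigl(p_i\trace{\varphi_i\Lambda_i}+p_j\trace{\varphi_j\Lambda_j}\Bigr).
\end{equation*}

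\emph{Step 2 (pairwise Helstrom bound).} Fix $i<j$. Positivity of the POVM elements and $\sum_k\Lambda_k=\Isf$ give $\Lambda_j\le\Isf-\Lambda_i$. Since $p_j\varphi_j\ge 0$, we get $p_j\trace{\varphi_j\Lambda_j}\le p_j\trace{\varphi_j(\Isf-\Lambda_i)}$. Set $\Asf\eqdef p_i\varphi_i-p_j\varphi_j$. The pair term is then at most
\begin{equation*}
  p_j+\trace{\Asf\Lambda_i}.
\end{equation*}
Write $\Asf=\Asf_+-\Asf_-$ with orthogonal positive parts. Because $0\le\Lambda_i\le\Isf$,
\begin{equation*}
  \trace{\Asf\Lambda_i}\le\trace{\Asf_+}=\tfrac12\bigl(\trace{\Asf}+\trace{\eabs{\Asf}}\bigr)=\tfrac12\bigl(p_i-p_j+\trace{\eabs{p_i\varphi_i-p_j\varphi_j}}\bigr).
\end{equation*}
This uses $\trace{\varphi_k}=1$. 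The pair term is therefore bounded by $\tfrac12\bigl(p_i+p_j+\trace{\eabs{p_i\varphi_i-p_j\varphi_j}}\bigr)$.

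\emph{Step 3 (summing).} We have $\sum_{i<j}(p_i+p_j)=(\Meu-1)\sum_i p_i=\Meu-1$. Combining with Steps 1 and 2 gives
\begin{equation*}
  (\Meu-1)P_\ctn\le\frac{\Meu-1}{2}+\frac12\sum_{i<j}\trace{\eabs{p_i\varphi_i-p_j\varphi_j}}.
\end{equation*}
This holds for every $\Dcal$, so it also holds for the supremum. Dividing by $\Meu-1$ and using $P_\etn=1-P_\ctn$ yields the claim.

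The only step that needs care is Step 2. One must check that replacing $\Lambda_j$ by $\Isf-\Lambda_i$ is legitimate even though the other POVM elements are discarded. It is, since we only need an upper bound and $\varphi_j\ge0$. One must also check that the bound $\trace{\Asf\Lambda_i}\le\trace{\Asf_+}$ uses nothing beyond $0\le\Lambda_i\le\Isf$. Everything else is bookkeeping.
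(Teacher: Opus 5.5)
Your proof is correct: Step~2 is the binary Helstrom bound applied pair by pair using only $\Lambda_i+\Lambda_j\le\Isf$, and Steps~1 and~3 are sound bookkeeping. This is essentially Qiu's original argument.

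The paper does not reprove this cited result. Within the paper it follows from \cref{thm:Gemini_trace_bound} together with the domination footnote after \cref{cor:Gemini_fidelity_bound}. That proof works in three steps. First it uses the exact identity $\sum_{i<j}\trace{\Delta_{ij}(\Lambda_i-\Lambda_j)}=\Meu P_\ctn-1$, with $\Delta_{ij}=p_i\varphi_i-p_j\varphi_j$. Then it bounds $\trace{\Delta_{ij}(\Lambda_i-\Lambda_j)}\le\trace{\eabs{\Delta_{ij}}}$. Finally it compares $1-\frac{1}{\Meu}-\frac{x}{\Meu}-\bigl(\frac12-\frac{x}{2(\Meu-1)}\bigr)=\frac{(\Meu-2)(\Meu-1-x)}{2\Meu(\Meu-1)}\ge 0$, where $x=\sum_{i<j}\trace{\eabs{\Delta_{ij}}}\le\Meu-1$ because $\trace{\eabs{\Delta_{ij}}}\le p_i+p_j$.

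The two routes are related exactly. For a fixed POVM,
\begin{equation*}
(\Meu-1)P_\ctn=\tfrac12\bigl[(\Meu-1)-(\Meu-2)P_\etn\bigr]+\tfrac12\sum_{i<j}\trace{\Delta_{ij}(\Lambda_i-\Lambda_j)}.
\end{equation*}
Your inequality $(\Meu-1)P_\ctn\le\frac{\Meu-1}{2}+\frac{x}{2}$ is precisely what remains after discarding the nonpositive term $-\frac{\Meu-2}{2}P_\etn=-\frac12\sum_{i<j}\trace{(p_i\varphi_i+p_j\varphi_j)(\Isf-\Lambda_i-\Lambda_j)}$. That term is the weight each pair sends to the other $\Meu-2$ outcomes, which the pairwise constraint cannot see. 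Keeping it gives $\frac{\Meu}{2}P_\etn\ge\frac{\Meu-1-x}{2}$, the paper's bound.

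So your route is more elementary and self-contained, with no global identity and no domination step. The paper's route buys a strictly sharper bound whenever $\Meu>2$ and $x<\Meu-1$. For $\Meu=2$ both reduce to Helstrom's bound.
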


Let us apply \cref{lem:fidelity_vs_trace} to create a fidelity bound out of \cref{thm:Qiu_trace_bound} in order to utilize \cref{eq:multiplicative_fidelity}.
\begin{corollary}
  \label{cor:Qiu_fidelity_bound}
  Given a discrimination problem between the quantum states $\Scal = {\varphi_1, \dots, \varphi_\Meu} \subset \Dscr(\Hcal)$ with a priori distribution $\pbm$, the error probability is bounded by
  \begin{IEEEeqnarray}{rCl}
    \IEEEeqnarraymulticol{3}{l}{%
      P_\etn(\pbm, \Scal)\geq \frac{1}{2} - \frac{1}{2 (\Meu - 1)}}
    \nonumber\\*\qquad\qquad%
    & \cdot &\sum_{1 \leq i < j \leq \Meu}\sqrt{p_i^2 + 2 (1 - 2F(\varphi_i, \varphi_j)) p_i p_j + p_j^2}.\IEEEeqnarraynumspace
  \end{IEEEeqnarray}
\end{corollary}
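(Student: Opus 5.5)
The plan is to substitute the upper half of \cref{lem:fidelity_vs_trace} into \cref{thm:Qiu_trace_bound}, one pair $(i,j)$ at a time. The trace norms enter \cref{thm:Qiu_trace_bound} with a negative sign. Hence any upper bound on $\trace{\eabs{p_i\varphi_i - p_j\varphi_j}}$ immediately gives a weaker but still valid lower bound on $P_\etn(\pbm,\Scal)$. The corollary should then follow from a short per-term computation.

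\textbf{Step 1 (applying the lemma).} Fix $1 \leq i < j \leq \Meu$ and set $\Asf = p_i\varphi_i$ and $\Bsf = p_j\varphi_j$. Both are positive semi-definite, with traces $a = p_i \leq 1$ and $b = p_j \leq 1$, so the hypotheses of \cref{lem:fidelity_vs_trace} hold. Its right-hand inequality gives
\begin{equation*}
  \trace{\eabs{p_i\varphi_i - p_j\varphi_j}} \leq 2\sqrt{\paren{\frac{p_i+p_j}{2}}^2 - F(p_i\varphi_i, p_j\varphi_j)} = \sqrt{(p_i+p_j)^2 - 4F(p_i\varphi_i, p_j\varphi_j)}.
\end{equation*}

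\textbf{Step 2 (scaling of the fidelity).} We need the homogeneity of $F$ under positive scalars. For $c, d \geq 0$ we have $\sqrt{c\varphi_i} = \sqrt{c}\sqrt{\varphi_i}$ and $\sqrt{d\varphi_j} = \sqrt{d}\sqrt{\varphi_j}$. Pulling the scalar out of the operator absolute value and the trace gives $\sqrt{F}(c\varphi_i, d\varphi_j) = \sqrt{cd}\,\sqrt{F}(\varphi_i,\varphi_j)$. Therefore $F(p_i\varphi_i, p_j\varphi_j) = p_ip_j F(\varphi_i,\varphi_j)$. Substituting this into Step 1 and expanding $(p_i+p_j)^2$ yields
\begin{equation*}
  \trace{\eabs{p_i\varphi_i - p_j\varphi_j}} \leq \sqrt{p_i^2 + 2\paren{1 - 2F(\varphi_i,\varphi_j)}p_ip_j + p_j^2},
\end{equation*}
which is exactly the summand in the corollary.

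\textbf{Step 3 (summing).} Sum the bound from Step 2 over all pairs $i<j$ and insert it into \cref{thm:Qiu_trace_bound}. Since the sum is multiplied by $-\frac{1}{2(\Meu-1)} \leq 0$, the inequality direction is preserved and the claimed bound follows.

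None of the steps is a genuine obstacle. The only points requiring care are checking the trace conditions $a,b\leq 1$ and the scaling identity for $F$ in Step 2. The latter must be verified against the paper's convention that $F$ denotes the squared fidelity $\operatorname{tr}^2\bigl(\eabs{\sqrt{\cdot}\sqrt{\cdot}}\bigr)$. That convention is what makes the factor $p_ip_j$, rather than $\sqrt{p_ip_j}$, appear.
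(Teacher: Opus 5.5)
Your proposal is correct and follows the paper's route exactly. The paper obtains this corollary by inserting the upper half of \cref{lem:fidelity_vs_trace} into \cref{thm:Qiu_trace_bound}. Your Steps 1--3 spell out the per-term computation that the paper leaves implicit, including the scaling identity $F(p_i\varphi_i,p_j\varphi_j)=p_ip_jF(\varphi_i,\varphi_j)$ under the squared-fidelity convention.
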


In~\cite[Eq.~(5)]{Li16_1}, a lower bound on the average error probability for multiple quantum state discrimination in terms of pairwise spectral overlaps was derived. 
\begin{theorem}[{\cite[Eq.~(5)]{Li16_1}}]
  \label{thm:Li_fidelity_bound}
  Given a discrimination problem between the quantum states $\Scal = {\varphi_1, \dots, \varphi_\Meu} \subset \Dscr(\Hcal)$ with a priori distribution $\pbm$, the error probability is bounded by
  \begin{IEEEeqnarray}{rCl}
    P_\etn(\pbm, \Scal) &\geq& \frac{1}{2 (\Meu - 1)} \sum_{1 \leq i < j \leq \Meu} \sum_{\substack{a_i \in \tn{spec}(\varphi_i) \\ a_j \in \tn{spec}(\varphi_j)}} 
    \IEEElinebreak{} \min(p_i a_i, p_j a_j) F(\Asf_i, \Asf_j)
  \end{IEEEeqnarray}
  where $\Asf_\ell = \ketbra{a_\ell}{a_\ell}$ is the pure state corresponding to the eigenvalue $a_\ell$ of $\varphi_\ell$ for $\ell \in [1:\Meu]$.
\end{theorem}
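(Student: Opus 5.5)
The approach is to reduce the $\Meu$-ary problem to a sum of pairwise binary problems, as in Qiu's argument behind \cref{thm:Qiu_trace_bound}. Then, instead of evaluating each binary error through the trace norm, I would lower-bound it directly with a Nussbaum--Szko{\l}a-type spectral inequality.

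\emph{Step 1 (pairwise reduction).} Fix an arbitrary POVM $\Dcal=\brace{\Lambda_1,\dots,\Lambda_\Meu}$ and write $e_i \eqdef p_i\trace{\varphi_i(\Isf-\Lambda_i)}$, so that $P_\etn(\pbm,\Scal,\Dcal)=\sum_i e_i$. Each index appears in exactly $\Meu-1$ unordered pairs, hence
\begin{equation*}
  (\Meu-1)\,P_\etn(\pbm,\Scal,\Dcal)=\sum_{1\le i<j\le \Meu}\bigl(e_i+e_j\bigr).
\end{equation*}
For a fixed pair, positivity gives $\Lambda_i \le \Isf-\Lambda_j$. Therefore
\begin{equation*}
  e_i+e_j \ \ge\ p_i\trace{\varphi_i(\Isf-\Lambda_i)}+p_j\trace{\varphi_j\Lambda_i},
\end{equation*}
and the right-hand side is the error of a binary test with $0\le\Lambda_i\le\Isf$. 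It is therefore at least the binary minimum $\min_{0\le\Lambda\le\Isf}\trace{\Asf(\Isf-\Lambda)}+\trace{\Bsf\Lambda}$, where $\Asf=p_i\varphi_i$ and $\Bsf=p_j\varphi_j$. By the Helstrom argument, this minimum is attained at a projector $\Pi$ (the projector onto the positive part of $\Bsf-\Asf$).

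\emph{Step 2 (spectral lower bound for the binary problem).} Let $\Asf=\sum_k \alpha_k\ketbra{a_k}{a_k}$ and $\Bsf=\sum_l\beta_l\ketbra{b_l}{b_l}$ be eigendecompositions, with the eigenbases completed so that zero eigenvalues are included. Inserting resolutions of the identity gives
\begin{equation*}
  \trace{\Asf\Pi^\perp}+\trace{\Bsf\Pi}=\sum_{k,l}\Bigl(\alpha_k\,\bigl|\langle b_l|\Pi^\perp|a_k\rangle\bigr|^2+\beta_l\,\bigl|\langle b_l|\Pi|a_k\rangle\bigr|^2\Bigr),
\end{equation*}
where $\Pi^\perp=\Isf-\Pi$. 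Set $x=\langle b_l|\Pi^\perp|a_k\rangle$ and $y=\langle b_l|\Pi|a_k\rangle$, so that $x+y=\langle b_l|a_k\rangle$. Each summand is then at least $\min(\alpha_k,\beta_l)(|x|^2+|y|^2)\ge\tfrac12\min(\alpha_k,\beta_l)\,|\langle a_k|b_l\rangle|^2$, by the parallelogram inequality $|x|^2+|y|^2\ge\tfrac12|x+y|^2$.

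\emph{Step 3 (assembly).} With $\alpha_k=p_ia_i$ and $\beta_l=p_ja_j$, the quantity $|\langle a_i|a_j\rangle|^2$ equals $F(\Asf_i,\Asf_j)$ for the pure states $\Asf_\ell=\ketbra{a_\ell}{a_\ell}$. Summing Step 2 over pairs in Step 1 and dividing by $\Meu-1$ gives the stated bound for every $\Dcal$, hence for the optimal measurement.

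The main obstacle is Step 2. One has to handle degenerate eigenvalues and zero eigenvalues carefully, so that the sum over $\tn{spec}(\varphi_i)\times\tn{spec}(\varphi_j)$ is interpreted with eigenvector labels and complete bases. One also has to check that discarding the cross terms, rather than bounding $\trace{\eabs{\Asf-\Bsf}}$ directly, loses only the factor $\tfrac12$ that appears in the statement.
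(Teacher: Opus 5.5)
Your proof is correct and complete. The paper does not prove this statement itself; it only cites Li. Your argument is the standard derivation of this multiple-state bound. It uses the pairwise reduction $(\Meu-1)P_\etn=\sum_{i<j}(e_i+e_j)$ with $e_i+e_j$ bounded below by a binary error. Combined with Helstrom's formula, that reduction gives exactly \cref{thm:Qiu_trace_bound}. You instead follow it with the Nussbaum--Szko{\l}a two-state inequality, and this yields the constant $\frac{1}{2(\Meu-1)}$ on the nose.

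Two small remarks. First, the minimizing projector in Step 1 projects onto the positive part of $p_i\varphi_i-p_j\varphi_j$, not of $p_j\varphi_j-p_i\varphi_i$. This is harmless, because Step 2 holds for every projector. In fact you can skip the Helstrom step and apply Step 2 directly to $\Lambda_i$: the inequalities $\Lambda_i^2\le\Lambda_i$ and $(\Isf-\Lambda_i)^2\le\Isf-\Lambda_i$ give the same decomposition as an inequality. Second, what you list as the main obstacle is already settled by your Step 2. Any choice of orthonormal eigenbases works, so degeneracies are immaterial, and terms with a zero eigenvalue contribute nothing. The factor $\frac12$ comes exactly from $|x|^2+|y|^2\ge\frac12|x+y|^2$.
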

% \begin{proof}
%   See Appendix~\ref{sapp:Lis_fidelity_bound}.
% \end{proof}

% === This helps Overleaf to understand how to collapse subsections === %

\subsection{Spectra for the Amplitude Damping Channel}

The ADC $\Ncal_\gamma$ models a quantum system with a ground- and an excited state, and it is defined by its Kraus operators
\begin{equation*}
  \Kcal_{\Ncal_\gamma} = \brace{\begin{bmatrix} 1 & 0 \\ 0 & \sqrt{1 - \gamma} \end{bmatrix} ,\quad \begin{bmatrix} 0 & \sqrt{\gamma} \\ 0 & 0 \end{bmatrix}}.
\end{equation*}

\begin{lemma}
  \label{lem:adc_specrtum}
  Given two orthogonal single-qubit states as channel inputs
  \begin{equation*}
    \rho_0 = \begin{bmatrix} a & b e^{\ieu f} \\ b e^{-\ieu f} & 1-a \end{bmatrix} ,\quad 
    \rho_1 = \begin{bmatrix} 1-a & -b e^{\ieu f} \\ -b e^{-\ieu f} & a \end{bmatrix},
  \end{equation*}
  where $b = \sqrt{a(1-a)}$, and the ADC $\Ncal_\gamma$, the spectra of the output is
  \begin{align*}
    \tn{spec}(\Ncal_\gamma(\rho_0)) &= \trans{\paren{\frac{1 + \sqrt{1 - 4c_a}}{2}, \frac{1 - \sqrt{1 - 4c_a}}{2}}} \\
    \tn{spec}(\Ncal_\gamma(\rho_1)) &= \trans{\paren{\frac{1 + \sqrt{1 - 4c_{1-a}}}{2}, \frac{1 - \sqrt{1 - 4c_{1-a}}}{2}}},
  \end{align*}
  where $c_a = \gamma (1 - \gamma) (1-a)^2$.
\end{lemma}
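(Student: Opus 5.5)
Since every output is a $2\times 2$ density matrix with unit trace, its eigenvalues are $\frac{1\pm\sqrt{1-4\det}}{2}$. The plan is therefore to compute $\Ncal_\gamma(\rho_0)$ explicitly from the two Kraus operators and show that its determinant is $c_a=\gamma(1-\gamma)(1-a)^2$. The claim for $\rho_1$ then follows by the substitution $a\mapsto 1-a$ and $b e^{\ieu f}\mapsto -b e^{\ieu f}$.

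\textbf{Output matrix.} Write $\Ksf_0=\diag{1,\sqrt{1-\gamma}}$ and $\Ksf_1=\sqrt{\gamma}\ketbra{0}{1}$. For a general qubit state $\begin{bmatrix} x & z\\ \bar z & 1-x\end{bmatrix}$:
\begin{itemize}
\item $\Ksf_0$ keeps the diagonal $x,\,(1-\gamma)(1-x)$ and scales the off-diagonal entries by $\sqrt{1-\gamma}$.
\item $\Ksf_1$ contributes $\gamma(1-x)\ketbra{0}{0}$.
\end{itemize}
Hence
\begin{equation*}
\Ncal_\gamma(\rho)=\begin{bmatrix} x+\gamma(1-x) & \sqrt{1-\gamma}\,z\\ \sqrt{1-\gamma}\,\bar z & (1-\gamma)(1-x)\end{bmatrix}.
\end{equation*}
The trace is preserved and equals $1$.

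\textbf{Determinant.} For $\rho_0$ we have $x=a$ and $\abs{z}^2=b^2=a(1-a)$. The determinant is
\begin{equation*}
(1-\gamma)(1-a)\bigl(a+\gamma(1-a)\bigr)-(1-\gamma)a(1-a).
\end{equation*}
The $a(1-\gamma)(1-a)$ terms cancel, leaving $\gamma(1-\gamma)(1-a)^2=c_a$. The characteristic polynomial $\lambda^2-\lambda+c_a$ gives the stated spectrum, ordered non-increasingly as the paper's notation requires.

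For $\rho_1$ we have $x=1-a$ with the same $\abs{z}^2$, and the identical computation yields $c_{1-a}$. The condition $b=\sqrt{a(1-a)}$ makes the inputs pure; this is exactly what causes the cancellation, and it guarantees $1-4c\ge 0$ automatically because the output is a valid state. There is no real obstacle here; the only care needed is to get the off-diagonal scaling factor $\sqrt{1-\gamma}$ right, since its square is what produces the cancellation.
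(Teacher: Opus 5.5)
Your argument is correct: the output matrix, the cancellation leaving determinant $\gamma(1-\gamma)(1-a)^2=c_a$ for $\rho_0$ and $\gamma(1-\gamma)a^2=c_{1-a}$ for $\rho_1$, and the eigenvalue formula for a unit-trace $2\times 2$ matrix with that determinant all check out. The paper states this lemma without a proof, and your direct Kraus-operator computation is exactly the routine verification it leaves to the reader.
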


\section{Main Results}
\label{sec:spectral_converse}
Let $\Scal = \brace{\varphi_1, \dots, \varphi_\Meu} \subset \Dscr(\Hcal_N)$ be a set of symbols with the sorted spectra $\abm_m = \tn{spec}(\varphi_m)$, spectral decomposition $\varphi_m = \sum_{j \in [1:N]} a_{m, j} \ketbra{a_{m, j}}{a_{m, j}}$, and a priori distribution $\pbm$. Throughout \cref{sec:spectral_converse} we consider the discrimination problem between the states of $\Scal$ and use the variables defined here.

\subsection{A Converse Bound on Error Probability}

Along the lines of~\cref{thm:Qiu_trace_bound} one can find the following converse.
\begin{theorem}
  \label{thm:Gemini_trace_bound}
  The error probability of any measurement is bounded by
  \begin{equation*}
    P_\etn(\pbm, \Scal) \geq 1 - \frac{1}{\Meu} - \frac{1}{\Meu} \sum_{1 \leq i < j \leq \Meu}\trace{\eabs{p_i \varphi_i - p_j \varphi_j}}.
  \end{equation*}
\end{theorem}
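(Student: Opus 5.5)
The plan is to bound the success probability $P_\ctn(\pbm,\Scal,\Dcal)=\sum_i p_i\trace{\varphi_i\Lambda_i}$ of an arbitrary POVM $\Dcal=\brace{\Lambda_1,\dots,\Lambda_\Meu}$ from above by $\frac{1}{\Meu}+\frac{1}{\Meu}\sum_{i<j}\trace{\eabs{p_i\varphi_i-p_j\varphi_j}}$. The statement then follows by taking the supremum over $\Dcal$ and using $P_\etn=1-P_\ctn$. I will use an averaging trick: write $\Meu$ copies of $P_\ctn$, and in the $j$-th copy compare each term $p_i\varphi_i$ against $p_j\varphi_j$.

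\textbf{Step 1 (averaging identity).} For every pair $(i,j)$ I write $p_i\trace{\varphi_i\Lambda_i}=p_j\trace{\varphi_j\Lambda_i}+\trace{\Dsf_{ij}\Lambda_i}$, where $\Dsf_{ij}\eqdef p_i\varphi_i-p_j\varphi_j$. I then sum this over all $i,j\in[1:\Meu]$. The left side becomes $\Meu\,P_\ctn$. The first term on the right becomes $\sum_j p_j\trace{\varphi_j\sum_i\Lambda_i}=\sum_j p_j=1$ by POVM completeness and unit trace. Since $\Dsf_{ii}=\Osf$, this gives
\begin{equation*}
\Meu\,P_\ctn(\pbm,\Scal,\Dcal)=1+\sum_{i\neq j}\trace{\Dsf_{ij}\Lambda_i}=1+\sum_{1\le i<j\le\Meu}\trace{\Dsf_{ij}(\Lambda_i-\Lambda_j)},
\end{equation*}
where the last equality pairs $(i,j)$ with $(j,i)$ and uses $\Dsf_{ji}=-\Dsf_{ij}$.

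\textbf{Step 2 (pairwise bound).} For $i\ne j$ we have $\Lambda_i,\Lambda_j\ge\Osf$ and $\Lambda_i+\Lambda_j\le\Isf$. Hence $\Xsf\eqdef\Lambda_i-\Lambda_j$ is Hermitian and satisfies $-\Isf\le\Xsf\le\Isf$. For Hermitian $\Dsf_{ij}$ I take the decomposition $\Dsf_{ij}=\Asf-\Bsf$ into orthogonal positive parts, as in the definition of $\eabs{\cdot}$ in the notation section. Then
\begin{equation*}
\trace{\Dsf_{ij}\Xsf}\le\trace{\Asf}+\trace{\Bsf}=\trace{\eabs{\Dsf_{ij}}}.
\end{equation*}
This holds because $\trace{\Asf\Xsf}\le\trace{\Asf}$ and $-\trace{\Bsf\Xsf}\le\trace{\Bsf}$.

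\textbf{Step 3 (conclusion).} Dividing the identity of Step 1 by $\Meu$ and applying Step 2 to each pair gives the claimed bound for every POVM, and hence for the optimum. I expect the main obstacle to be finding the identity in Step 1, in particular pairing ordered terms so that the operator $\Lambda_i-\Lambda_j$ appears and is bounded in operator norm by $1$. Once that identity is in hand, the rest is the standard Hölder-type inequality $\trace{\Dsf\Xsf}\le\trace{\eabs{\Dsf}}$ for $\Xsf$ with $-\Isf\le\Xsf\le\Isf$.
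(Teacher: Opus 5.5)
Your proof is correct and is essentially the paper's argument. The paper works with the same sum $S=\sum_{i<j}\operatorname{tr}\bigl((p_i\varphi_i-p_j\varphi_j)(\Lambda_i-\Lambda_j)\bigr)$, proves $S=\Meu P_\ctn-1$ by expanding it directly instead of summing your identity over ordered pairs, and bounds each term by the trace norm using the same decomposition into orthogonal positive parts.
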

\begin{IEEEproof}
  See Appendix~\ref{sapp:Gemini_trace_bound}.
\end{IEEEproof}

Using \cref{lem:fidelity_vs_trace} as in the case of \cref{thm:Qiu_trace_bound} we turn this into a fidelity bound.

\begin{corollary}
  \label{cor:Gemini_fidelity_bound}
  The error probability of any measurement is bounded by
  \begin{IEEEeqnarray}{rCl}
    P_\etn(\pbm, \Scal) &\geq& 1 - \frac{1}{\Meu} - \frac{1}{\Meu} \sum_{1 \leq i < j \leq \Meu}
    \IEEElinebreak{} \sqrt{p_i^2 + 2 (1 - 2F(\varphi_i, \varphi_j)) p_i p_j + p_j^2}.
  \end{IEEEeqnarray}
\end{corollary}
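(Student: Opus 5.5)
The plan is to obtain the corollary directly from \cref{thm:Gemini_trace_bound}. The trace norms $\trace{\eabs{p_i \varphi_i - p_j \varphi_j}}$ enter that bound with a negative coefficient $-\frac{1}{\Meu}$. So any upper bound on each trace norm yields a valid lower bound on $P_\etn(\pbm, \Scal)$. The tool for this is the right-hand inequality of \cref{lem:fidelity_vs_trace}, applied pair by pair with $\Asf = p_i \varphi_i$ and $\Bsf = p_j \varphi_j$.

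First, check the hypotheses of \cref{lem:fidelity_vs_trace}. Both operators are positive semi-definite, with traces $a = p_i \leq 1$ and $b = p_j \leq 1$, since $\varphi_i, \varphi_j$ are density operators and $\pbm$ is a probability vector. The lemma therefore gives
\begin{equation*}
  \trace{\eabs{p_i \varphi_i - p_j \varphi_j}} \leq 2\sqrt{\paren{\frac{p_i + p_j}{2}}^2 - F(p_i \varphi_i, p_j \varphi_j)}.
\end{equation*}

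Second, express the fidelity of the scaled operators through the fidelity of the states. By homogeneity of the square root, $\sqrt{p_i \varphi_i}\sqrt{p_j \varphi_j} = \sqrt{p_i p_j}\,\sqrt{\varphi_i}\sqrt{\varphi_j}$. Since $F$ is the squared trace norm of this product, $F(p_i \varphi_i, p_j \varphi_j) = p_i p_j F(\varphi_i, \varphi_j)$.

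Third, simplify the right-hand side. Moving the factor $2$ inside the root gives
\begin{equation*}
  \sqrt{(p_i+p_j)^2 - 4 p_i p_j F(\varphi_i,\varphi_j)} = \sqrt{p_i^2 + 2\paren{1 - 2F(\varphi_i,\varphi_j)} p_i p_j + p_j^2}.
\end{equation*}
The radicand is nonnegative because $F \leq 1$, so it is at least $(p_i - p_j)^2 \geq 0$.

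Finally, substitute this upper bound for each pair $1 \leq i < j \leq \Meu$ into \cref{thm:Gemini_trace_bound}, which gives exactly the stated inequality. No step is a genuine obstacle. The only point needing care is the scaling identity for $F$ in the second step, which relies on the paper's convention that $F$ denotes the squared fidelity.
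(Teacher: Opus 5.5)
Your proposal is correct and takes the same route as the paper. The paper likewise obtains the corollary by applying the right-hand inequality of \cref{lem:fidelity_vs_trace} to each pairwise trace norm in \cref{thm:Gemini_trace_bound}. Your scaling identity $F(p_i\varphi_i,p_j\varphi_j)=p_ip_jF(\varphi_i,\varphi_j)$ and the algebra of the radicand are details the paper leaves implicit.
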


Note that \cref{thm:Gemini_trace_bound} is always dominates \cref{thm:Qiu_trace_bound}, and \cref{cor:Gemini_fidelity_bound} is always dominates \cref{cor:Qiu_fidelity_bound}.\footnote{In general, $1 - \frac{1}{\Meu} - \frac{x}{\Meu} \geq \frac{1}{2} - \frac{x}{2 (\Meu - 1)}$ if and only if $x \leq \Meu - 1$, and the maximal value that both the sum of trace distances and sum of fidelity-containing square roots is $\Meu-1$.}

% === This helps Overleaf to understand how to collapse subsections === %

\subsection{A Converse Bound on Success Probability}

\cref{lem:mqht_to_bqht} does not give an explicit expression for the error/success probability in general, and \cref{def:quasi-perfect_code} requires $t\cdot\mu$ for calculating the probability. The following bound is an instance of the optimization in \cref{lem:mqht_to_bqht}, where we only have to calculate the spectra, instead of optimizing the POVM or something equivalent.
\begin{theorem}[Spectral converse] \label{thm:spectral_converse}
  Let $\Ical_{\Scal}$ be an index set with size $|\Ical_{\Scal}| = N$ and the property
  \begin{equation} \label{eq:def_of_Ical}
    \forall (i, j) \in \Ical_{\Scal}, \forall (i', j') \in \Ical_{\Scal}^c: p_i \cdot a_{i, j} \geq p_{i'} \cdot a_{i', j'}.
  \end{equation}
  For any POVM $\Dcal = \brace{\Lambda_i}_{i \in [1:\Meu]}$, the success probability is upper-bounded by
  \begin{equation*}
    P_\ctn(\pbm, \Scal, \Dcal) \leq \sum_{(i,j) \in \Ical_{\Scal}} p_i a_{i, j} \leq 1,
  \end{equation*}
  where the first equality holds in the case 
  \begin{equation*} \label{eq:sc_optimality}
    \forall i \in [1:\Meu]: \Lambda_i = \sum_{j: (i,j) \in \Ical_{\Scal}} \ketbra{a_{i,j}}{a_{i,j}}.
  \end{equation*}
\end{theorem}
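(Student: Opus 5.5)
The plan is to expand each $\varphi_i$ in its eigenbasis and regard the success probability as a linear functional over a set of weights with a simple polytope structure. By the spectral decomposition,
\begin{equation*}
  P_\ctn(\pbm,\Scal,\Dcal) = \sum_{i\in[1:\Meu]}\sum_{j\in[1:N]} p_i a_{i,j}\, w_{i,j},
  \qquad w_{i,j} \eqdef \bra{a_{i,j}}\Lambda_i\ket{a_{i,j}}.
\end{equation*}
The key point is that the weights $w_{i,j}$ satisfy two constraints. First, $0\le \Lambda_i\le\Isf$, so $0\le w_{i,j}\le 1$. Second,
\begin{equation*}
  \sum_{i,j} w_{i,j}=\sum_i \trace{\Lambda_i}=\trace{\Isf}=N,
\end{equation*}
because for each fixed $i$ the vectors $\{\ket{a_{i,j}}\}_j$ form an orthonormal basis. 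Consequently $P_\ctn$ is at most the maximum of the linear function $\sum c_{i,j}w_{i,j}$, with coefficients $c_{i,j}=p_ia_{i,j}\ge0$, over the set $\{w\in[0,1]^{\Meu N}:\sum w_{i,j}= N\}$.

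This relaxed problem is a fractional knapsack (a continuous selection problem), and its optimum places weight one on the $N$ largest coefficients. That is exactly the set $\Ical_\Scal$ described by \eqref{eq:def_of_Ical}. To prove it, I will use an exchange argument:
\begin{equation*}
  \sum_{(i,j)\in\Ical_\Scal} c_{i,j} - \sum_{i,j} c_{i,j}w_{i,j}
  = \sum_{\Ical_\Scal} c_{i,j}(1-w_{i,j}) - \sum_{\Ical_\Scal^c} c_{i,j}w_{i,j}
  \ge \tau\Bigl(\sum_{\Ical_\Scal}(1-w_{i,j}) - \sum_{\Ical_\Scal^c} w_{i,j}\Bigr)=0.
\end{equation*}
Here $\tau$ is any threshold satisfying $\min_{\Ical_\Scal}c\ge\tau\ge\max_{\Ical_\Scal^c}c$, and such a $\tau$ exists by \eqref{eq:def_of_Ical}. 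The final bracket vanishes because $|\Ical_\Scal|=N=\sum w$. This yields the first inequality. The second inequality, $\sum_{\Ical_\Scal}p_ia_{i,j}\le1$, follows from a similar bound applied to the $N$ largest coefficients. The simplest route is to take $\Lambda_1=\Isf$ and note the sum is at most $\sum_{i,j}p_ia_{i,j}=\sum_i p_i=1$, since all coefficients are nonnegative.

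For the equality case, I will substitute the stated projectors. If $\Lambda_i=\sum_{j:(i,j)\in\Ical_\Scal}\ketbra{a_{i,j}}{a_{i,j}}$, then $\trace{\varphi_i\Lambda_i}=\sum_{j:(i,j)\in\Ical_\Scal}a_{i,j}$, which immediately gives equality. There is an implicit hypothesis here: equality holds provided these operators actually form a POVM, meaning they are orthogonal and sum to $\Isf$. I will state this hypothesis explicitly in the proof.

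The main obstacle is recognizing the correct relaxation, in particular the trace identity $\sum_{i,j}w_{i,j}=N$. This identity is what allows $\Ical_\Scal$ to have size $N$ rather than $\Meu N$, and it is the step that makes the bound nontrivial. Beyond that, the only care needed is handling ties in the choice of $\Ical_\Scal$, and the threshold $\tau$ takes care of those.
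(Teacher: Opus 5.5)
Your proof is correct, but it takes a different and more direct route than the paper.

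\textbf{The paper's route.} The paper obtains the bound as an instance of the exact meta-converse in \cref{lem:mqht_to_bqht}. It fixes $\mu=\Isf/N$, so the binary test is between $\operatorname{diag}(p_1\varphi_1,\dots,p_\Meu\varphi_\Meu)$ and $\frac{\Isf}{\Meu}\otimes\frac{\Isf}{N}$. The constraint $\operatorname{tr}(\psi_1 T)\le 1/\Meu$ then becomes a trace budget $\operatorname{tr}T\le N$. The paper then inserts the block-diagonal test $\operatorname{diag}(\Lambda_1,\dots,\Lambda_\Meu)$ built from $\mathcal{I}_{\mathcal{S}}$.

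\textbf{Your route.} You bypass the lemma entirely. You keep only the diagonal weights $w_{i,j}$ of each $\Lambda_i$ in the eigenbasis of $\varphi_i$, and use only $0\le w_{i,j}\le 1$ and $\sum_{i,j}w_{i,j}=N$. You then solve the resulting fractional knapsack with a threshold/exchange argument, which also handles ties cleanly.

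\textbf{Comparison.} The two are really the same relaxation seen from different sides, since $\operatorname{diag}(\Lambda_1,\dots,\Lambda_\Meu)$ is exactly a test with trace $N$. Your version, however, is self-contained, and it proves a step the paper leaves implicit. Plugging a particular test into the infimum defining $\alpha$ only bounds $\alpha$ from \emph{above}. To get a lower bound on $P_\etn$, one needs that the projector onto the $N$ largest eigenvalues is optimal under the trace budget. That optimality is precisely what your exchange inequality establishes, and it applies verbatim in the eigenbasis $\ket{i}\otimes\ket{a_{i,j}}$ of the block operator.

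What the paper's route buys in return is interpretation. \Cref{lem:mqht_to_bqht} becomes an equality after maximizing over $\mu$. So the spectral converse is exactly the $\mu=\Isf/N$ member of that family, and any gap to the true error probability comes from this choice of $\mu$.

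\textbf{Minor remark.} The mention of taking $\Lambda_1=\Isf$ in your argument for the second inequality is superfluous. Nonnegativity of the $p_i a_{i,j}$ together with $\sum_j a_{i,j}=1$ suffices, as you go on to say.
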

\begin{IEEEproof}
  See Appendix~\ref{sapp:spectral_converse}
\end{IEEEproof}

The spectral converse is always defined and can be tight, meaning that it is achievable. In these cases, we are dealing with a quasi-perfect code.
\begin{corollary} \label{thm:spectral_converse_tight}
  If the spectral converse bound is tight, i.e.,
  \begin{equation*}
    P_\ctn(\pbm, \Scal) = \sum_{(i,j) \in \Ical_{\Scal}} p_i a_{i, j},
  \end{equation*}
  then $\Scal$ forms a quasi-perfect code.
\end{corollary}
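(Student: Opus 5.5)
The plan is to read off the quasi-perfect-code data $(t,\mu)$ from the spectral threshold defining $\Ical_{\Scal}$, and to extract the required operator relations from the equality conditions in the proof of \cref{thm:spectral_converse}. Following the footnote to \cref{def:quasi-perfect_code}, I work with the weighted operators $p_m\varphi_m$. Let $\tau \eqdef \min_{(i,j)\in\Ical_{\Scal}} p_i a_{i,j}$ be the $N$-th largest value among all $p_i a_{i,j}$. I choose $\mu = \Isf/N$ and $t = N\tau$, so that $t\mu = \tau\Isf$. Then:
\begin{itemize}
\item $\brace{p_m\varphi_m - \tau\Isf}_+$ is the projector $P_m^{>}$ onto the eigenvectors $\ket{a_{m,j}}$ with $p_m a_{m,j} > \tau$;
\item $\brace{p_m\varphi_m - \tau\Isf}_0$ is the projector $P_m^{=}$ onto the eigenvectors with $p_m a_{m,j} = \tau$.
\end{itemize}
Set $Q_m \eqdef P_m^{>} + P_m^{=}$.

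\textbf{Step 1: an optimal POVM and its weights.} Since the POVM set is compact, an optimal POVM $\Dcal=\brace{\Lambda_i}$ exists, and by tightness it attains $\sum_{(i,j)\in\Ical_{\Scal}} p_i a_{i,j}$. I recall the mechanism behind \cref{thm:spectral_converse}. Write
\[
P_\ctn = \sum_{i,j} p_i a_{i,j}\, w_{i,j}, \qquad w_{i,j} \eqdef \bra{a_{i,j}}\Lambda_i\ket{a_{i,j}}.
\]
Here $0\le w_{i,j}\le 1$ because $0\le\Lambda_i\le\Isf$, and $\sum_{i,j} w_{i,j} = \sum_i \trace{\Lambda_i} = N$. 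This is a fractional knapsack problem, so its value is at most the sum of the $N$ largest coefficients. Equality forces two conditions: $w_{i,j}=1$ whenever $p_i a_{i,j} > \tau$, and $w_{i,j}=0$ whenever $p_i a_{i,j}<\tau$.

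\textbf{Step 2: pairwise orthogonality.} Take $\ket{a_{i,j}}$ with $w_{i,j}=1$. Since $\Lambda_i \le \Isf$, $\bra{v}(\Isf-\Lambda_i)\ket{v}=0$ forces $(\Isf-\Lambda_i)\ket{v}=0$, so $\Lambda_i\ket{a_{i,j}}=\ket{a_{i,j}}$. Because $\sum_k\Lambda_k=\Isf$ and each $\Lambda_k\ge 0$, this gives $\bra{a_{i,j}}\Lambda_k\ket{a_{i,j}}=0$ for $k\neq i$, hence $\Lambda_k\ket{a_{i,j}}=0$. Consequently, for $k\ne i$:
\begin{itemize}
\item the range of $P_k^{>}$ lies in the eigenvalue-$1$ eigenspace of $\Lambda_k$;
\item the range of $P_i^{>}$ lies in the kernel of $\Lambda_k$.
\end{itemize}
These two subspaces are orthogonal, so $P_i^{>}P_k^{>}=\Osf$.

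\textbf{Step 3: covering.} For $p_i a_{i,j}<\tau$ we have $w_{i,j}=0$, which by positivity gives $\Lambda_i\ket{a_{i,j}}=0$. Hence $\Lambda_i$ vanishes on the orthogonal complement of the range of $Q_i$, i.e.\ $\Lambda_i = Q_i\Lambda_i Q_i \le Q_i$. Summing over $i$ yields
\[
\Isf=\sum_i\Lambda_i \le \sum_i \bigl(\brace{p_i\varphi_i-t\mu}_+ + \brace{p_i\varphi_i-t\mu}_0\bigr),
\]
which is exactly the condition in \cref{def:quasi-perfect_code}.

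The main obstacle is Step 1. One must make the equality analysis of the proof of \cref{thm:spectral_converse} precise, in particular the handling of ties at the value $\tau$, which is why the zero-eigenspace projector is needed. One must also check that it is legitimate to pass from the uniform-prior definition of a QPC to the weighted operators $p_i\varphi_i$, as the footnote to \cref{def:quasi-perfect_code} permits.
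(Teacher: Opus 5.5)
Your proof is correct, and it takes a genuinely different route from the paper's. The paper's proof starts from the premise that the specific projectors $\Lambda_i=\sum_{j:(i,j)\in\Ical_{\Scal}}\ketbra{a_{i,j}}{a_{i,j}}$ form a POVM, and deduces their mutual orthogonality from $|\Ical_{\Scal}|=N$. It then builds a non-scalar witness $t\mu$ that is diagonal in the selected eigenvectors: it sits $\varepsilon/2$ below each selected eigenvalue above the threshold and equals the threshold on the tied ones. As a result, $\{\varphi_i-t\mu\}_+$ is $\Lambda_i$ minus its tied part. You instead do three things:
\begin{enumerate}
\item take an arbitrary optimal POVM;
\item write the success probability as a fractional knapsack in the weights $w_{i,j}$ and extract the equality conditions;
\item use the scalar witness $t\mu=\tau\Isf$, which is the same uniform $\mu$ underlying the proof of \cref{thm:spectral_converse}.
\end{enumerate}

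Your route gains several things. First, it derives the needed structure from the tightness hypothesis instead of assuming that the $\Ical_{\Scal}$-projectors form an optimal measurement. That assumption need not hold for every admissible $\Ical_{\Scal}$ when there are ties. For example, take $\varphi_1=\varphi_2=\Isf/2$ on a qubit with a uniform prior: the bound $1/2$ is tight, yet selecting the same eigenvector for both messages gives $\Lambda_1=\Lambda_2$, which is not a POVM. Second, it handles ties through the kernel projector without requiring a gap $\varepsilon$ to exist. Third, its witness is basis-independent and is read off directly from the spectra. Fourth, your Step~1 doubles as a self-contained proof of \cref{thm:spectral_converse} that bypasses \cref{lem:mqht_to_bqht}. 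The paper's witness, in exchange, is adapted to a given optimal projective measurement, whose strict positive parts it reproduces.

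The two obstacles you flag at the end are already settled by your own argument. The difference between $\sum_{i,j}p_ia_{i,j}w_{i,j}$ and the bound equals $\sum_{p_ia_{i,j}>\tau}(p_ia_{i,j}-\tau)(w_{i,j}-1)+\sum_{p_ia_{i,j}<\tau}(p_ia_{i,j}-\tau)w_{i,j}$. This is a sum of nonpositive terms, so your equality conditions are exact and the tied weights are left free. For a uniform prior, you recover the unweighted definition by replacing $t$ with $\Meu t$.
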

\begin{IEEEproof}
  See Appendix~\ref{sapp:spectral_converse}
\end{IEEEproof}

This coincides with maximum likelihood decoding over the fictitious channel
\begin{equation*}
  \Wsf = \trans{[\sigma_1(\abm_1), \hdots, \sigma_\Meu(\abm_\Meu)]}
\end{equation*}
under the input distribution $\pbm$, where $\sigma_i$ permutes the otherwise sorted spectrum in a way that the eigenvalues related to $\Ical_\Scal$ do not end up in the same column.

As an example, consider the depolarizing channel
\begin{equation*}
  \Ncal_p(\varphi) = (1-p) \varphi + \frac{p}{N} \Isf.
\end{equation*}
With some care in choosing the channel inputs (\cref{eq:tightness_depol_condition}), the spectral converse will be tight for $\Ncal_p$. This is formalized next.
\begin{corollary} \label{cor:depol_sc_tight}
  Let $\Scal$ be a set of channel inputs, if 
  \begin{equation} \label{eq:tightness_depol_condition}
    P_\ctn(\pbm, \Scal) = \sum_{(i,j) \in \Ical_{\Scal}} p_i a_{i, j},
  \end{equation}
  then $\Ical_{\Ncal_p \circ \Scal} = \Ical_\Scal$ and
  \begin{equation} \label{eq:tightness_depol}
    P_\ctn(\pbm, \Ncal_p \circ \Scal) = \sum_{(i,j) \in \Ical_\Scal} p_i \hat{a}_{i, j},
  \end{equation}
  where $\hat{\abm}_i = \tn{spec}(\Ncal_p(\varphi))$.
\end{corollary}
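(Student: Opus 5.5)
The plan is to use the depolarizing channel's action on spectra and eigenvectors, together with the optimality condition in \cref{thm:spectral_converse}. First I describe the output spectra. Since $\Ncal_p(\varphi_i) = (1-p)\varphi_i + \frac{p}{N}\Isf$ commutes with $\varphi_i$, it has the same eigenvectors $\ket{a_{i,j}}$ and eigenvalues $\hat a_{i,j} = (1-p)a_{i,j} + \frac{p}{N}$. For $p\le 1$ this map is non-decreasing, so the sorted order and the labelling by $j$ are preserved.

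\textbf{Step 1: $\Ical_{\Ncal_p\circ\Scal} = \Ical_\Scal$.} Take $(i,j)\in\Ical_\Scal$ and $(i',j')\in\Ical_\Scal^c$. Then
\begin{equation*}
p_i\hat a_{i,j} - p_{i'}\hat a_{i',j'} = (1-p)\bigl(p_i a_{i,j} - p_{i'}a_{i',j'}\bigr) + \frac{p}{N}\bigl(p_i - p_{i'}\bigr).
\end{equation*}
Under the uniform prior, which is the channel coding setting of interest, the second term vanishes and the first is non-negative by \eqref{eq:def_of_Ical}. Hence $\Ical_\Scal$ satisfies \eqref{eq:def_of_Ical} for the outputs and can be chosen as $\Ical_{\Ncal_p\circ\Scal}$.

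The main obstacle is a general prior. There the term $\frac{p}{N}(p_i-p_{i'})$ can reverse the ordering. I would first try to use tightness to show that $\Ical_\Scal$ contains, for each $i$, only indices where the input ordering has slack. If that fails, I would state the corollary for the uniform (or equal-weight) prior, which is the case covered by the condition \eqref{eq:tightness_depol_condition}.

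\textbf{Step 2: an achieving POVM.} By \eqref{eq:tightness_depol_condition}, the bound of \cref{thm:spectral_converse} is attained for $\Scal$. I need a measurement attaining it that is diagonal in the eigenbases, namely the POVM $\Lambda_i = \sum_{j:(i,j)\in\Ical_\Scal}\ketbra{a_{i,j}}{a_{i,j}}$. To obtain it, I would reuse the appendix argument behind \cref{thm:spectral_converse} and \cref{thm:spectral_converse_tight}. Tightness forces each optimal $\Lambda_i$ to be supported on the span of $\{\ket{a_{i,j}}:(i,j)\in\Ical_\Scal\}$. Since the total number of indices is $N$ and the POVM sums to $\Isf$, these $N$ vectors are orthonormal and span $\Hcal_N$. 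Consequently the projectors $\Lambda_i$ above are pairwise orthogonal and form a POVM.

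\textbf{Step 3: evaluation and conclusion.} Apply this fixed POVM to the outputs. Because $\Lambda_i$ projects onto eigenvectors of $\Ncal_p(\varphi_i)$, we get $\trace{\Ncal_p(\varphi_i)\Lambda_i} = \sum_{j:(i,j)\in\Ical_\Scal}\hat a_{i,j}$. The success probability is therefore $\sum_{(i,j)\in\Ical_\Scal} p_i\hat a_{i,j}$. By Step 1 and \cref{thm:spectral_converse} applied to $\Ncal_p\circ\Scal$, this quantity is also an upper bound. Hence equality holds, which gives \eqref{eq:tightness_depol}.
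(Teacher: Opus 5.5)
Your route is the paper's: $\Ncal_p$ keeps the eigenvectors and maps each spectrum by $a\mapsto(1-p)a+\frac{p}{N}$, so $\Ical_\Scal$ is retained and the eigenbasis projectors $\Lambda_i=\sum_{j:(i,j)\in\Ical_\Scal}\ketbra{a_{i,j}}{a_{i,j}}$ stay optimal. You are right that Step~1 needs equal priors; the paper's proof ignores the weights $p_i$. However, your first plan, rescuing a general $\pbm$ via tightness, cannot succeed, because the statement is false there. Take $N=\Meu=2$, $\varphi_1=\ketbra{0}{0}$, $\varphi_2=\ketbra{1}{1}$, $p_1=0.9$, $p_2=0.1$. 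The inputs are perfectly distinguishable, so \eqref{eq:tightness_depol_condition} holds with $\Ical_\Scal=\{(1,1),(2,1)\}$. For $p=\frac12$ the weighted output eigenvalues are $0.675, 0.225$ and $0.075, 0.025$, so $\Ical_{\Ncal_p\circ\Scal}=\{(1,1),(1,2)\}$. Helstrom gives $P_\ctn(\pbm,\Ncal_p\circ\Scal)=0.9$, whereas $\sum_{(i,j)\in\Ical_\Scal}p_i\hat a_{i,j}=0.75$. So the uniform prior is a necessary hypothesis, not a fallback.

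Step~2 is a genuine gap: tightness does not force the eigenbasis projectors to form an optimal POVM when the $N$-th largest value of $p_ia_{i,j}$ ties with the $(N+1)$-th. Take the trine states $\varphi_k=\ketbra{t_k}{t_k}$, $\ket{t_k}=\cos\frac{2\pi k}{3}\ket{0}+\sin\frac{2\pi k}{3}\ket{1}$, $k\in[1:3]$, with uniform prior. The values are $\frac13$ (three times) and $0$, and the bound $\frac23$ is attained by $\Lambda_k=\frac23\ketbra{t_k}{t_k}$. Yet every admissible $\Ical_\Scal$ selects two non-orthogonal $\ket{t_k}$, so your projectors do not sum to $\Isf$. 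The paper's appendix makes the same unproved ``tightness implies orthogonality'' claim. Your inference is sound only when there is no such tie, since then $\langle a_{i,j}\vert\Lambda_i\vert a_{i,j}\rangle$ must equal $1$ on $\Ical_\Scal$ and $0$ off it.

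For the uniform prior you can drop Step~2 entirely. For every POVM $\Dcal$, using $\sum_i\trace{\Lambda_i}=N$,
\begin{equation*}
P_\ctn\Bigl(\frac{\bone}{\Meu},\Ncal_p\circ\Scal,\Dcal\Bigr)=(1-p)\,P_\ctn\Bigl(\frac{\bone}{\Meu},\Scal,\Dcal\Bigr)+\frac{p}{\Meu}.
\end{equation*}
Using $|\Ical_\Scal|=N$, also $\sum_{(i,j)\in\Ical_\Scal}p_i\hat a_{i,j}=(1-p)\sum_{(i,j)\in\Ical_\Scal}p_ia_{i,j}+\frac{p}{\Meu}$. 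Taking the supremum over $\Dcal$ (with $1-p\geq 0$) turns \eqref{eq:tightness_depol_condition} directly into \eqref{eq:tightness_depol}, whatever form the optimal measurement takes. Your Step~1 then supplies $\Ical_{\Ncal_p\circ\Scal}=\Ical_\Scal$.
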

\begin{IEEEproof}
  See Appendix~\ref{sapp:depol_sc_tight}
\end{IEEEproof}

% === This helps Overleaf to understand how to collapse subsections === %

\subsection{Special Case: Classical Channel Coding Through Quantum Noise}

In this section, we require a set of symbols $\Scal = \brace{\rho_1, \dots, \rho_q} \subset \Dscr(\Hcal_q^{\otimes n})$ be a set of symbols with the sorted spectra $\abm_i = \tn{spec}(\rho_i)$, and spectral decomposition
$$\rho_i = \sum_{j \in [1:q]} a_{i, j} \ketbra{a_{i, j}}{a_{i, j}}.$$
Then, given a classical code $\Cscr = \brace{\xbm_m}_{m \in [1:\Meu]} \subseteq \Fbb_q^n$ with a priori probabilities $\pbm$, we are adapting \cref{thm:spectral_converse} to account for the structure of this coding scheme. First, note the following eigenbasis structure.

\begin{lemma}
  \label{lem:product_spectrum}
  Let $\ket{v} = \bigotimes_{\ell \in [1:n]} \ket{a_{i_\ell, j_\ell}}$ be a tensor product of eigenvectors, then
  \begin{equation*}
    \bigotimes_{\ell \in [1:n]} \varphi_{i_\ell} \cdot \ket{v} = \prod_{\ell \in [1:n]} a_{i_\ell, j_\ell} \cdot \ket{v}.
  \end{equation*}
  Furthermore, these pairs of eigenvectors and eigenvalues form the complete eigenbasis and spectrum.
\end{lemma}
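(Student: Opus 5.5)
The plan is to verify the eigen-equation directly from the mixed-product property of the tensor product, and then use a dimension count to get completeness. Note that in the statement the factors $\varphi_{i_\ell}$ are single-use states; in the notation of this subsection they are the $\rho_{i_\ell}$ with spectral decompositions $\rho_i = \sum_j a_{i,j}\ketbra{a_{i,j}}{a_{i,j}}$. The vector $\ket{v}$ is built from eigenvectors of the same symbols $i_\ell$ that appear in the operator; this matching of indices is what makes the computation work.

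\emph{Eigen-equation.} For operators $\Asf_\ell$ and vectors $\ket{u_\ell}$ we have
\begin{equation*}
\Bigl(\bigotimes_{\ell \in [1:n]} \Asf_\ell\Bigr)\Bigl(\bigotimes_{\ell \in [1:n]} \ket{u_\ell}\Bigr) = \bigotimes_{\ell \in [1:n]} \Asf_\ell \ket{u_\ell}.
\end{equation*}
I will prove this by induction on $n$ from the defining relation $(\Asf\otimes\Bsf)(\ket{x}\otimes\ket{y}) = \Asf\ket{x}\otimes\Bsf\ket{y}$. Applying it with $\Asf_\ell = \rho_{i_\ell}$ and $\ket{u_\ell} = \ket{a_{i_\ell,j_\ell}}$ gives $\rho_{i_\ell}\ket{a_{i_\ell,j_\ell}} = a_{i_\ell,j_\ell}\ket{a_{i_\ell,j_\ell}}$ in each factor. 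By multilinearity, the scalars $a_{i_\ell,j_\ell}$ can be pulled out of the tensor product, which yields $\prod_\ell a_{i_\ell,j_\ell}\,\ket{v}$.

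\emph{Completeness.} Fix the symbol sequence $(i_1,\dots,i_n)$, i.e., fix the codeword. For each $\ell$, the family $\{\ket{a_{i_\ell,j}}\}_{j\in[1:q]}$ is an orthonormal eigenbasis of $\Hcal_q$, since $\rho_{i_\ell}$ is Hermitian and hence unitarily diagonalizable. Inner products of product vectors factorize:
\begin{equation*}
\Bigl(\bigotimes_\ell \bra{a_{i_\ell,j_\ell}}\Bigr)\Bigl(\bigotimes_\ell \ket{a_{i_\ell,j'_\ell}}\Bigr) = \prod_\ell \delta_{j_\ell j'_\ell}.
\end{equation*}
Hence the $q^n$ product vectors indexed by $(j_1,\dots,j_n)\in[1:q]^n$ are orthonormal. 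As $q^n = \dim \Hcal_q^{\otimes n}$, they form an orthonormal basis.

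The operator is therefore diagonal in this basis, with diagonal entries $\prod_\ell a_{i_\ell,j_\ell}$. Its spectrum is thus exactly the multiset of these products, counted with multiplicity and then sorted. A repeated product simply means a degenerate eigenvalue and causes no problem.

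I do not expect a real obstacle. The only point needing care is to state completeness per fixed codeword: eigenvectors drawn from different symbols' eigenbases are not eigenvectors of that codeword's state, so the index tuple $(i_\ell)$ must match the operator. Degenerate single-use spectra, where the eigenbasis is not unique, are handled by fixing one chosen spectral decomposition of each $\rho_i$ in advance.
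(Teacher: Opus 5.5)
Your proposal is correct and takes the paper's route. The paper derives the eigen-equation from the definition of the tensor product and gets completeness from a count of the $q^n$ product eigenvectors in $\Hcal_q^{\otimes n}$. Your explicit check that these product vectors are orthonormal, via factorization of inner products, is what the paper's one-line ``cardinality argument'' needs to be rigorous, and reading completeness per fixed codeword is the right interpretation of the statement.
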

\begin{IEEEproof}
  The first half is a direct consequence of the definition of the tensor product. The second holds by a cardinality argument.
\end{IEEEproof}

Then,  \cref{thm:spectral_converse} takes the shape
\begin{proposition} \label{prp:spectral_converse}
  Let $\Ical_{\Scal}$ be an index set of size $|\Ical_{\Scal}| = q^n$ and the property
  \begin{IEEEeqnarray}{rl}
    \forall (m, \jbm) \in \Ical_{\Scal}&, \forall (m', \jbm') \in \Ical_{\Scal}^c: \\
    &p_m \prod_{l \in [1:n]} a_{x_{m,l}, j_l} \geq p_{m'} \prod_{l \in [1:n]} a_{x_{m',l}, j_l'}.
  \end{IEEEeqnarray}
  Then
  \begin{equation*}
    P_{\ctn}(\pbm, \Scal_\Cscr) \leq \sum_{(m, \jbm) \in \Ical_{\Scal}} p_m \prod_{l \in [1:n]} a_{x_{m,l}, j_l}.
  \end{equation*}
\end{proposition}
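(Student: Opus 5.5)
The plan is to obtain the proposition as a direct specialization of \cref{thm:spectral_converse}, applied to the set of output states $\Scal_\Cscr = \brace{\varphi_m}_{m \in [1:\Meu]}$ with $\varphi_m = \bigotimes_{l \in [1:n]} \rho_{x_{m,l}}$ acting on $\Hcal_q^{\otimes n}$, whose dimension is $N = q^n$. All the work lies in identifying the spectra $\abm_m = \tn{spec}(\varphi_m)$ and showing that the index set of the proposition is an admissible index set in the sense of \eqref{eq:def_of_Ical}.

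First I will use \cref{lem:product_spectrum} with $i_\ell = x_{m,\ell}$. For each codeword $\xbm_m$, the vectors $\ket{v_{m,\jbm}} = \bigotimes_{l} \ket{a_{x_{m,l}, j_l}}$, with $\jbm \in [1:q]^n$, form a complete orthonormal eigenbasis of $\varphi_m$. Orthonormality follows because each factor family is orthonormal in $\Hcal_q$. The corresponding eigenvalues are $\prod_{l} a_{x_{m,l}, j_l}$. Hence the multiset $\tn{spec}(\varphi_m)$ equals $\brace{\prod_l a_{x_{m,l},j_l}}_{\jbm \in [1:q]^n}$, with multiplicities counted by the $q^n$ multi-indices $\jbm$.

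The only bookkeeping subtlety is that \cref{thm:spectral_converse} indexes eigenvalues by a position $j \in [1:N]$ in the sorted spectrum, whereas here they are indexed by $\jbm$. I will fix, for each $m$, a bijection $\pi_m \colon [1:q]^n \to [1:N]$ that sorts the products in non-increasing order, breaking ties arbitrarily. Under this bijection, $a_{m,\pi_m(\jbm)} = \prod_l a_{x_{m,l},j_l}$ and $\ket{a_{m,\pi_m(\jbm)}} = \ket{v_{m,\jbm}}$. The spectral decomposition is unaffected by how ties are broken. The proof of \cref{thm:spectral_converse} never uses the sorting itself, only the labeling of eigenpairs, so relabeling is harmless.

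Next I map the index set. Given $\Ical_\Scal$ in the proposition, I set $\Ical' = \brace{(m, \pi_m(\jbm)) : (m,\jbm) \in \Ical_\Scal}$. Since each $\pi_m$ is a bijection, $|\Ical'| = q^n = N$, and complements correspond to complements. The product-form dominance condition of the proposition then becomes exactly \eqref{eq:def_of_Ical} for $\Ical'$. Invoking \cref{thm:spectral_converse} gives
\begin{equation*}
  P_\ctn(\pbm, \Scal_\Cscr, \Dcal) \leq \sum_{(m,\jbm) \in \Ical_\Scal} p_m \prod_{l} a_{x_{m,l}, j_l}
\end{equation*}
for every POVM $\Dcal$. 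Taking the supremum over $\Dcal$ yields the claim.

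I do not expect a real obstacle. The one point that needs care is the multiplicity and tie issue: I must make sure the theorem's hypothesis concerns the multiset of eigenvalues with an associated orthonormal eigenbasis rather than distinct eigenvalues, which is what the completeness statement of \cref{lem:product_spectrum} provides.
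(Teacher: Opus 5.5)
Your proposal is correct and follows the paper's proof, which is simply a direct application of \cref{thm:spectral_converse} to the codeword states, using the product eigenbasis and eigenvalues from \cref{lem:product_spectrum}. Your relabeling of sorted positions by the multi-indices $\jbm$ through the bijections $\pi_m$, together with your handling of ties and multiplicities, makes explicit what the paper leaves implicit.
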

\begin{IEEEproof}
  Direct application of \cref{thm:spectral_converse} with the spectrum from \cref{lem:product_spectrum}.
\end{IEEEproof}
\cref{prp:spectral_converse} is the sphere packing bound for any $(n, \Meu)_q$ code over the fictive channel
\begin{equation*}
  \Wsf = \trans{[\sigma_1(\abm_1), \hdots, \sigma_\Meu(\abm_\Meu)]}
\end{equation*}
and a priori distribution $\pbm$.

% === This helps Overleaf to understand how to collapse subsections === %

\subsection{Computational Complexities of Converse Bounds}
\label{sec:complexities}

We give an account on the computational complexity of the converse bound, as mentioned before our focus is on the fidelity bounds.

First, calculating the bounds in the general case, $\Meu$ messages chosen from $\Dscr(\Hcal_N)$. The spectral converse is quicker to calculate by a factor of $\Meu$ than the others, see \cref{tab:general}.
\begin{table}[t]
\begin{center}
  \caption{Computational complexities of converse bounds when there are no assumptions on the density operators used for CQ communication. The spectral converse gives a factor of $\Meu$ speed up. See the corresponding algorithms in Appendix~\ref{sapp:complexity_general}.} \label{tab:general}
  \vspace{-2ex}
  \begin{tabular}{cc}
    \toprule
    Converse bound & Complexity \\
    \midrule
    \cref{cor:Qiu_fidelity_bound}    & $\Ocal(\Meu^2 N^3)$ \\
    \cref{thm:Li_fidelity_bound}     & $\Ocal(\Meu^2 N^3)$ \\
    \cref{cor:Gemini_fidelity_bound} & $\Ocal(\Meu^2 N^3)$ \\
    \cref{thm:spectral_converse}     & $\Ocal(\Meu N^3)$ \\
    \bottomrule
  \end{tabular}
\end{center}
\end{table}
For the algorithms supporting the entries of \cref{tab:general} see Appendix~\ref{sapp:complexity_general}. These terms don't collapse evenly as we add symmetries to our communication model.

Now consider using a $(n,q^{r n})_q$ code with symbols $\Scal \subset \Dscr(\Hcal_q)$. Furthermore, we assume that the symbols are symmetric in the sense of \cref{def:symmetric_set}.
\begin{definition} \label{def:symmetric_set}
  $\Scal \subset \Dscr(\Hcal)$ is called symmetric if for all $\varphi_i, \varphi_j \in \Scal$ there is a unitary $\Usf_{ij}$ such that
  \begin{equation*}
    \varphi_i = \Usf_{ij} \varphi_j \hermi{\Usf_{ij}}.
  \end{equation*}
\end{definition}
This symmetry is sometimes used to simplify calculations (see, for example, \cite{BlascoCollVazquez-VilarFonollosa22_1}). Then the spectral converse becomes dependent only on the code parameters, and the complexities reduce to the ones in Table~\ref{tab:symmetric}. See Appendix~\ref{sapp:complexity_symmetric} for the algorithms behind the symmetric complexities.

\begin{table}[t]
\begin{center}
  \caption{Computational complexities of converse bounds with the symmetry defined in \cref{def:symmetric_set} assumed. Notably, the spectral converse is the single polynomial-time bound. For comparison, we added the General column, which contains the complexities, if one would use the algorithms that do not assume symmetry. See the corresponding algorithms in Appendix~\ref{sapp:complexity_symmetric}.} \label{tab:symmetric}
   \vspace{-2ex}
  \begin{tabular}{ccc}
    \toprule
    Converse bound & General & Symmetric \\
    \midrule
    \cref{cor:Qiu_fidelity_bound}    & $\Ocal(q^{(2r + 3) n})$ & $\Ocal(q^{2 r n} n)$ \\
    \cref{thm:Li_fidelity_bound}     & $\Ocal(q^{(2r + 3) n})$ & $\Ocal(q^{2 (r + 1) n} n)$ \\
    \cref{cor:Gemini_fidelity_bound} & $\Ocal(q^{(2r + 3) n})$ & $\Ocal(q^{2 r n} n)$ \\
    \cref{thm:spectral_converse}     & $\Ocal(q^{(r + 3) n})$ & $\Ocal(q^2 n)$ \\
    \bottomrule
  \end{tabular}
\end{center}
\end{table}

\begin{figure*}[t]
    \centering
    \subfloat[{$[3,1]_2$ repetition code $\Cscr_\textnormal{R}$.}]{%
        \begin{minipage}[t]{0.49\textwidth}
            \centering
            \resizebox{\linewidth}{!}{\input{figures/Repetition_n3_k1_q2/plot}}
        \end{minipage}%
        \label{fig:rep_n3}%
    } \hfill
    \subfloat[{$[6,1]_2$ repetition code $\Cscr_\textnormal{R}$.}]{%
        \begin{minipage}[t]{0.49\textwidth}
            \centering
            \resizebox{\linewidth}{!}{\input{figures/Repetition_n6_k1_q2/plot}}
        \end{minipage}%
        \label{fig:rep_n6}%
    } \\

    \subfloat[{$[3,2]_2$ SPC code $\Cscr_{\textnormal{SPC}}$.}]{%
        \begin{minipage}[t]{0.49\textwidth}
            \centering
            \resizebox{\linewidth}{!}{\input{figures/SPC_n3_k2_q2/plot}}
        \end{minipage}%
        \label{fig:spc_n3}%
        }\hfill
        \subfloat[{$[4,3]_2$ SPC code $\Cscr_{\textnormal{SPC}}$.}]{%
        \begin{minipage}[t]{0.49\textwidth}
            \centering
            \resizebox{\linewidth}{!}{\input{figures/SPC_n4_k3_q2/plot}}
        \end{minipage}%
        \label{fig:spc_n4}%
    }

    \subfloat[{$[5,4]_2$ SPC code $\Cscr_{\textnormal{SPC}}$.}]{%
        \begin{minipage}[t]{0.49\textwidth}
            \centering
            \resizebox{\linewidth}{!}{\input{figures/SPC_n5_k4_q2/plot}}
        \end{minipage}%
        \label{fig:spc_n5}%
        }\hfill
        \subfloat[{$[6,3]_2$ shortened Hamming code $\Cscr_\textnormal{H}$.}]{%
        \begin{minipage}[t]{0.49\textwidth}
            \centering
            \resizebox{\linewidth}{!}{\input{figures/Hamming_n6_k3_q2/plot}}
        \end{minipage}%
        \label{fig:ham_n6}%
    }
    \caption{Comparing converses for several short codes under amplitude damping noise}
    \label{fig:plots}
\end{figure*}

% === This helps Overleaf to understand how to collapse subsections === %

\subsection{Classical Channel Coding for Amplitude Damping Noise}
\label{sec:SC_channel_coding}

In the case of the ADC $\Ncal_\gamma$, \cref{lem:adc_specrtum} tells us that choosing $a = \nicefrac{1}{2}$ results in the symmetry of \cref{def:symmetric_set} regardless of the relative phase $f$. These are also the optimal channel inputs for the single-channel use case~\cite{HavasLinRosnesLai25_1}.
\begin{corollary}
  Let the channel input symbols be
  \begin{equation*}
    \Scal = \brace{\rho_0, \rho_1}= \brace{\frac{1}{2}
      \begin{bmatrix} 
        1 & e^{\ieu f} \\ e^{-\ieu f} & 1 
      \end{bmatrix}.
      \frac{1}{2}
      \begin{bmatrix}
        1 & -e^{\ieu f} \\ -e^{-\ieu f} & 1
      \end{bmatrix}}.
  \end{equation*}
  Then any $(n,\Meu)_2$ code $\Cscr$ over $\Ncal_\gamma$ abides by the converse
  \begin{IEEEeqnarray}{rCl}
    P_{\ctn}((\Ncal_\gamma \circ \Scal)_\Cscr)& \leq &\sum_{i \in [0:\ell]} \binom{n}{i} a_0^i a_1^{n-i}
    \IEEElinebreak{+} \paren{\frac{2^n}{\Meu} - L_\ell} a_0^{\ell+1} a_1^{n-\ell-1},
  \end{IEEEeqnarray}
  where $\tn{spec}(\rho_0) = \tn{spec}(\rho_1) = \trans{(a_0, a_1)}$, and
  \begin{equation*}
    L_\ell = \sum_{i \in [0:\ell]} \binom{n}{i} ,\quad L_\ell \leq \frac{2^n}{\Meu} < L_{\ell+1}.
  \end{equation*}
\end{corollary}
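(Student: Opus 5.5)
The plan is to specialize \cref{prp:spectral_converse} to the binary ADC setting with the uniform a priori distribution $\pbm = \nicefrac{\bone}{\Meu}$, which is the equiprobable coding setup behind $P_\ctn((\Ncal_\gamma \circ \Scal)_\Cscr)$. The proof then reduces to counting how many eigenvalues of each size appear in the multiset from which $\Ical_\Scal$ picks its $2^n$ largest elements.

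\emph{Step 1: common output spectrum.} Setting $a = \nicefrac{1}{2}$ in \cref{lem:adc_specrtum} gives $1-a = a$, so $c_a = c_{1-a} = \gamma(1-\gamma)/4$. Hence $\tn{spec}(\Ncal_\gamma(\rho_0)) = \tn{spec}(\Ncal_\gamma(\rho_1))$ for every relative phase $f$. I will call this common two-element spectrum $\{a_0, a_1\}$ and label it so that $a_1 \geq a_0$. This matches the indexing in the displayed bound, whose $i=0$ term is $a_1^n$. The mismatch between this labelling and the paper's "non-increasing" convention is only notational, and I will remark on it.

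\emph{Step 2: eigenvalues of the codeword outputs.} By \cref{lem:product_spectrum}, the output for codeword $\xbm_m$ has eigenvalues $\prod_{l} a_{x_{m,l}, j_l}$ as $\jbm$ ranges over $\{0,1\}^n$. Because both single-use spectra equal $\{a_0,a_1\}$, each such product equals $\lambda_i \eqdef a_0^{i} a_1^{n-i}$, where $i$ is the number of positions at which the smaller eigenvalue is selected. This does not depend on $\xbm_m$. So every codeword contributes exactly $\binom{n}{i}$ copies of $\lambda_i$ for each $i \in [0:n]$. With uniform priors, the full multiset of weighted values $p_m\prod_l a_{x_{m,l},j_l}$ therefore contains $\Meu\binom{n}{i}$ copies of $\lambda_i/\Meu$.

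\emph{Step 3: selecting $\Ical_\Scal$ and summing.} Since $a_1 \geq a_0$, the sequence $\lambda_0 \geq \lambda_1 \geq \dots \geq \lambda_n$ is non-increasing. An index set of size $2^n$ satisfying the ordering property of \cref{prp:spectral_converse} can thus be built greedily. It takes all $\Meu L_\ell$ entries at levels $0,\dots,\ell$, which is possible because $\Meu L_\ell \leq 2^n$. It then takes the remaining $2^n - \Meu L_\ell$ entries from level $\ell+1$, which has enough entries because $2^n < \Meu L_{\ell+1}$. Ties at level $\ell+1$ are harmless, since the ordering property only requires a weak inequality. Summing and multiplying by $\nicefrac{1}{\Meu}$ gives $\sum_{i\in[0:\ell]}\binom{n}{i}\lambda_i + \bigl(\frac{2^n}{\Meu} - L_\ell\bigr)\lambda_{\ell+1}$, which is the claimed bound. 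The index $\ell$ is well defined because $L_{-1}=0$, $L_n = 2^n$, and $\Meu \geq 1$. The edge case $\Meu = 1$ gives $\ell = n$, where the second term vanishes and the bound equals $1$.

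\emph{Main obstacle.} The argument is mostly bookkeeping. The step I expect to need the most care is the ordering and counting in Step 3. It requires confirming that the greedy set really satisfies \eqref{eq:def_of_Ical}, which uses $\lambda_i/\Meu \geq \lambda_{\ell+1}/\Meu$ for all $i \leq \ell+1$ against every excluded entry. It also requires handling ties and the boundary case $2^n/\Meu = L_\ell$ consistently with the definition of $\ell$, and keeping the $a_0$/$a_1$ labelling straight.
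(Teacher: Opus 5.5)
Your proposal is correct and takes the same route as the paper, whose proof is only the remark that the bound is a particular case of \cref{prp:spectral_converse}. Your Steps 1--3 supply the bookkeeping the paper leaves implicit: equal output spectra at $a=\nicefrac{1}{2}$ via \cref{lem:adc_specrtum}, codeword-independent eigenvalues $a_0^i a_1^{n-i}$ with multiplicity $\Meu\binom{n}{i}$, and level-by-level filling of the $2^n$ slots. You are also right to read the spectrum as that of the \emph{output} states labelled so that $a_1 \geq a_0$, and to take the prior as uniform, since under the literal readings the displayed bound would fail.
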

\begin{IEEEproof}
  Particular case of \cref{prp:spectral_converse}.
\end{IEEEproof}
  
In many cases~\cref{prp:spectral_converse} looks like a multinomial probability expression. These do not have closed forms in general, so very large code lengths $n$ would make computation difficult. However, since \cref{prp:spectral_converse} is the sphere packing bound for some fictive channel, one can use~\cite[Th.~49]{PolyanskiyPoorVerdu10_1} to calculate a normal approximation of the spectral converse for these code lengths $n$. On top of that, in this case, the fictitious channel is binary and symmetric, so the more accurate version of the statement yields an approximation that can be computed in constant time, since it does not contain summations or optimization anymore.
\begin{theorem}[{\cite[Th.~52]{PolyanskiyPoorVerdu10_1}}]
  Given a $(n, \Meu)_2$ code $\Cscr$, let
  \begin{equation*}
    \delta = \frac{1 - \sqrt{1 - \gamma (1 - \gamma)}}{2}
  \end{equation*}
  be the smaller eigenvalue. Then for any rate $r = \frac{\log_2(\Meu)}{n}$ above the capacity $C = 1 - \Hb(\delta)$ of the fictive channel, the success probability is
  \begin{IEEEeqnarray*}{rCl}
    P_\ctn\paren{\frac{\Isf}{\Meu}, (\Ncal_\gamma \circ \Scal)_\Cscr} = \Phi\paren{\frac{C - r - \frac{\log_2(n)}{2n}}{V}} + \Ocal\paren{\frac{e^{-\frac{n}{2}}}{n}},
  \end{IEEEeqnarray*}
  where $V = \sqrt{\frac{\delta(1-\delta)}{n}} \log_2 \frac{1-\delta}{\delta}$ and $\Phi$ is the standard normal cumulative distribution function.
\end{theorem}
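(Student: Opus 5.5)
The plan is to read the statement through the spectral converse. First I reduce the quantum problem, via \cref{prp:spectral_converse} and the preceding corollary, to a classical binary symmetric channel. Then I import the refined normal approximation of \cite[Th.~52]{PolyanskiyPoorVerdu10_1} for that channel. The delicate part is closing the gap between the spectral bound and the actual success probability, which the statement asserts as an equality.

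\emph{Step 1 (identify $\delta$).} Put $a=\nicefrac{1}{2}$ in \cref{lem:adc_specrtum}. This gives $c_a = c_{1-a} = \gamma(1-\gamma)/4$, so both outputs $\Ncal_\gamma(\rho_0)$ and $\Ncal_\gamma(\rho_1)$ have spectrum $\trans{(1-\delta,\delta)}$ with $\delta = \bigl(1-\sqrt{1-\gamma(1-\gamma)}\bigr)/2$. The fictitious channel $\Wsf$ of \cref{prp:spectral_converse} is built by permuting the second row so that the eigenvalues indexed by $\Ical_\Scal$ fall in distinct columns. With this choice $\Wsf$ is exactly a BSC with crossover probability $\delta$, and its capacity is $C = 1-\Hb(\delta)$. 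By \cref{lem:product_spectrum}, the eigenvalues of each codeword output are $(1-\delta)^{n-i}\delta^{i}$. Hence, under the uniform prior, the bound of the preceding corollary is literally the sphere-packing bound for an $(n,\Meu)$ code over this BSC: the best $2^n/\Meu$ ``Hamming balls'' of binomial mass.

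\emph{Step 2 (asymptotics of the bound).} Write $\log_2 \Meu = nr$ and let $\ell$ be the radius defined by $L_\ell \le 2^n/\Meu < L_{\ell+1}$. The bound equals $\Pr[\mathrm{Bin}(n,\delta)\le \ell]$ plus a fractional boundary term. I would then invert \cite[Th.~52]{PolyanskiyPoorVerdu10_1}, which expands $\log_2 \Meu^*$ for the BSC as $nC - \sqrt{nV'}\,Q^{-1}(\epsilon) + \tfrac12\log_2 n + O(1)$, with $V'=\delta(1-\delta)\log_2^2\frac{1-\delta}{\delta}$. Solving for $\epsilon$ at fixed $r$ and normalising by $\sqrt n$ produces the argument $\bigl(C-r-\frac{\log_2 n}{2n}\bigr)/V$, where $V=\sqrt{V'/n}$. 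This inversion is where I would be careful about signs, about the direction of $\Phi$ versus $Q$, and about the lattice structure of the binomial. Because the relevant sum is a binomial tail, I would use an explicit local limit (Stirling-type) estimate rather than plain Berry--Esseen, in order to push the remainder below $O(1/\sqrt n)$. The $e^{-n/2}/n$ order is what I would aim to extract from the exponential tightness of the binomial tail when $r>C$.

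\emph{Step 3 (equality for the success probability).} The statement asserts equality for $P_\ctn$ itself, not merely for the bound. I would interpret $P_\ctn(\nicefrac{\Isf}{\Meu},\cdot)$, with the measurement optimised as in the notation section, and I need the reverse inequality. The plan is to exhibit a measurement achieving the spectral bound up to the same remainder. The natural candidate is the POVM from the equality condition of \cref{thm:spectral_converse}, namely projectors onto product eigenvectors assigned by the ball structure. Its validity requires the eigenbases of different codewords to be orthogonal where they are assigned. This fails in general, because $\Ncal_\gamma(\rho_0)$ and $\Ncal_\gamma(\rho_1)$ do not commute, and I expect this to be the main obstacle. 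My first attempt would be to show that the discrepancy between the true optimum and the spectral bound is itself exponentially small in the regime $r>C$. Both quantities should then lie within $\Ocal(e^{-n/2}/n)$ of the normal term, using that above capacity both are exponentially dominated by the same type class. If that fails, the honest fallback is to state the result for the spectral-converse value, for which Steps 1--2 suffice.
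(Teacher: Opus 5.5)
Your Step~1 is correct, and Steps~1--2 follow the paper's route. The paper's only justification is the remark that, under the uniform prior, the bound of the preceding corollary is the sphere-packing bound of a BSC with crossover probability $\delta$, followed by the citation of \cite[Th.~52]{PolyanskiyPoorVerdu10_1}. The gap is in Step~2, where you assert that inverting that theorem produces the stated argument and remainder. It does not. Put $V_1 = nV^2 = \delta(1-\delta)\log_2^2\frac{1-\delta}{\delta}$. Your own recalled expansion says that for fixed $\epsilon\in(0,1)$
\[
\log_2 \Meu^*(n,\epsilon) = nC - \sqrt{nV_1}\,Q^{-1}(\epsilon) + \tfrac{1}{2}\log_2 n + \Ocal(1).
\]
Sandwiching between the converse and achievability parts shows that the same expansion holds for the sphere-packing value itself. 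Setting $\log_2\Meu = nr$ and using that $\Phi$ is Lipschitz gives
\[
1-\epsilon = \Phi\Bigl(\frac{C - r + \frac{\log_2 n}{2n}}{V}\Bigr) + \Ocal\bigl(n^{-1/2}\bigr).
\]
So the logarithmic term enters with a plus sign, and the minus sign in the statement is an error. The remainder is polynomial, and all of this holds only in the regime $r = C + \Ocal(n^{-1/2})$. For fixed $r>C$ the theorem says nothing. No local-limit refinement of the binomial tail can give $\Ocal(e^{-n/2}/n)$ either. The sphere-packing value decays like $e^{-nD(\tau\|\delta)+o(n)}$, with the divergence in nats and $\Hb(\tau)=1-r$, whereas $\Phi(\cdot)$ decays like $e^{-n(r-C)^2/(2V_1)+o(n)}$. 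These exponents agree only to second order in $r-C$ in general, and for $r$ close to $C$ both quantities are far larger than $e^{-n/2}$.

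You are right that the equality POVM does not exist, since the two outputs do not commute for $0<\gamma<1$. But Step~3 is worse than delicate: the claimed equality for the true success probability is false, so your first attempt cannot succeed. The paper's citation does not address this either; its own text speaks of a normal approximation \emph{of the spectral converse}. For $0<\gamma<1$ both outputs have diagonal $\frac12(1+\gamma,1-\gamma)$, so no mixture of them is maximally mixed. Hence the Holevo capacity of the binary-input CQ channel $x\mapsto\Ncal_\gamma(\rho_x)$ is $\chi^\star=\Hb(\frac{1-\gamma}{2})-\Hb(\delta)<1-\Hb(\delta)=C$. Combine the strong converse for CQ channels (Ogawa--Nagaoka, Winter) with the fact that the optimal success probability is non-increasing in $\Meu$. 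Then every code with $r\ge C$ has $P_\ctn\le c\,e^{-nE}$ for constants $c$ and $E>0$ independent of $r$. The exponent of $\Phi(\cdot)$, by contrast, tends to zero as $r\downarrow C$. So for $r$ slightly above $C$ the two sides differ by far more than $e^{-n/2}/n$, for every code. In particular, the true optimum and the spectral bound are not governed by the same type class: for $\chi^\star<r<C$ the spectral bound tends to one while $P_\ctn\to 0$. Your fallback is therefore the only defensible version, and it should read as follows. The spectral-converse value, which upper-bounds $P_\ctn$, equals $\Phi\bigl((C-r+\frac{\log_2 n}{2n})/V\bigr)+\Ocal(n^{-1/2})$ for $r=C+\Ocal(n^{-1/2})$.
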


\section{Numerical Results} \label{sec:numerical}
% \input{sections/Numerical}
% \subsection{Small Examples}

For low noise, the spectral converse gives a tighter bound than the other converses presented. On \cref{fig:plots}~\protect\subref{fig:rep_n3} and~\protect\subref{fig:rep_n6}, we plotted all curves, including the numerical solution to the semi-definite program for the $3$ and $6$-bit repetition codes. Similarly, \cref{fig:plots}~\protect\subref{fig:spc_n3},~\protect\subref{fig:spc_n4}, and \protect\subref{fig:spc_n5} show the bounds for the $3$, $4$, and $5$ bit-long single parity-check code (SPC). Lastly, \cref{fig:plots}~\protect\subref{fig:ham_n6} contains the curves for the shortened Hamming code defined by the generator matrix
\begin{equation*}
  \begin{bmatrix}
    1 & 0 & 0 & 1 & 1 & 0 \\
    0 & 1 & 0 & 1 & 0 & 1 \\
    0 & 0 & 1 & 0 & 1 & 1
  \end{bmatrix}.
\end{equation*}
In each case, the channel input symbols were $\Scal = \brace{\ket{\pm}}$ over the ADC $\Ncal_\gamma$.

\section{Conclusion and Further Directions} \label{sec:conclusion}
We presented converse bounds for multiple quantum hypothesis testing as a means of discussing probabilities in classical-quantum communication. The first is an improvement over an existing bound by Qiu, and the second, which we call the spectral converse, is a particular case of the underlying optimization problem. We proved tightness for the depolarizing channel under some natural conditions and demonstrated relative tightness for the ADC via numerical methods for small error-correcting codes when the channel has small noise. The spectral converse is easier to compute in terms of complexity than the other converse bounds for hypothesis testing, and it becomes significantly simpler as we introduce some symmetry into the channel model.
Because it can be seen as maximum-likelihood decoding over a fictitious classical channel, the spectral converse admits well-known normal approximations from the classical community.

It remains to be shown under what conditions each of these bounds is tighter than the others. Similarly, it is unknown how strong a restriction it is to consider only product states when sending classical information through quantum noise.

% \section*{Acknowledgment}

% We are indebted to Michael Shell for maintaining and improving
% \texttt{IEEEtran.cls}. 

%%%%%% 
%% To balance the columns at the last page of the paper use this
%% command:
%% 
% \enlargethispage{-1.2cm} 
%% 
%% If the balancing should occur in the middle of the references, use
%% the following trigger:
%% 

% \ifthenelse{\boolean{arXiv}}{\IEEEtriggeratref{4}}{
% \IEEEtriggeratref{15}
\balance % temporarily balancing the REFs  
% } 
%%   
%%   which triggers a \newpage (i.e., new column) just before the given
%%   reference number. Note that you need to adapt this if you modify
%%   the paper.  The "triggered" command can be changed if desired:
%%   
%   \IEEEtriggercmd{\enlargethispage{-20cm}}
%%   
%%%%%%   
%%%%%%   
%%   Appendix:
%%   If needed a single appendix is created by
%%   
%   \appendix
%%   
%%   If several appendices are needed, then the command
%%   
%   \appendices
%%   
%%   in combination with further \section commands can be used.
%%%%%%   
%%%%%%%%%%%%%%%%%%%%%%%%%%%%%%%%%%%%%%%%%%%%%%%%%%%%%%%%%%%%%%%%%%%%%   
\appendices

\section{Proof of Error Converse} \label{app:error_converse}
\subsection{Proof of \cref{thm:Gemini_trace_bound}} \label{sapp:Gemini_trace_bound}

Let $\Dcal = \brace{\Lambda_1, \dots, \Lambda_\Meu}$ be an arbitrary POVM for our discrimination problem, then
$$P_\ctn(\pbm, \Scal, \Dcal) = 1 - P_\etn(\pbm, \Scal, \Dcal) = \sum_{i \in [1:\Meu]} \trace{p_i \varphi_i \Lambda_i}.$$
Now, let $\Delta_{ij} = p_i \varphi_i - p_j \varphi_j$ and
$$S = \sum_{1 \leq i < j \leq \Meu} \trace{\Delta_{ij} (\Lambda_i - \Lambda_j)},$$
so we have
\begin{IEEEeqnarray*}{rCl}
  S& = &\sum_{i < j} p_i \trace{\varphi_i \Lambda_i} + \sum_{i < j} p_j \trace{\varphi_j \Lambda_j}
  \IEEElinebreak{-} \sum_{i < j} p_i \trace{\varphi_i \Lambda_j} - \sum_{i < j} p_j \trace{\varphi_j \Lambda_i} \\
  & = &(\Meu - 1) \sum_{i} p_i \trace{\varphi_i \Lambda_i} - \sum_{i \neq j} p_i \trace{\varphi_i \Lambda_j} \\
  & = &(\Meu - 1) P_\ctn(\pbm, \Scal, \Dcal) - P_\etn(\pbm, \Scal, \Dcal) \\
  & = &\Meu P_\ctn(\pbm, \Scal, \Dcal) - 1.
\end{IEEEeqnarray*}
On the other hand, we can bound the terms in $S$ by the decomposition $\Delta_{ij} = \Asf_{ij} - \Bsf_{ij}$, where $\Asf, \Bsf \geq 0$ and $\Asf \Bsf = 0$. Thus
\begin{IEEEeqnarray*}{rCl}
  \trace{\Delta_{ij} (\Lambda_i - \Lambda_j)} &=& \trace{\Asf_{ij} \Lambda_i} - \trace{\Asf_{ij} \Lambda_j}
  \IEEElinebreak{-} \trace{\Bsf_{ij} \Lambda_i} + \trace{\Bsf_{ij} \Lambda_j} \\
  & \leq &\trace{\Asf_{ij} \Lambda_i} + \trace{\Bsf_{ij} \Lambda_j} \\
  & \leq &\trace{\Asf_{ij}} + \trace{\Bsf_{ij}} = \tn{tr} \absol{\Delta_{ij}}.
\end{IEEEeqnarray*}
So $S = \Meu P_\ctn(\pbm, \Scal, \Dcal) - 1\leq \sum_{i < j} \tn{tr} \absol{\Delta_{ij}}$. Substituting $P_\etn = 1 - P_\ctn$ and division by $\Meu$ yields
\begin{IEEEeqnarray*}{rCl}
    \frac{1}{\Meu} \sum_{i < j} \tn{tr} \absol{\Delta_{ij}} &\geq& 1 - \frac{1}{\Meu} - P_\etn(\pbm, \Scal, \Dcal).
\end{IEEEeqnarray*}

\section{Proof of Spectral Converse and Related Statements}
\label{app:success_converse}
\subsection{Proof of \cref{thm:spectral_converse}}
\label{sapp:spectral_converse}

We can derive the spectral converse bound from \cref{lem:mqht_to_bqht}. Instead of the maximization in \cref{eq:bqht_max} we can set $\mu = \frac{\Isf}{2^n}$ to the maximally mixed, state and in place of taking the infimum in \cref{eq:bqht_type_I} we can choose
\begin{IEEEeqnarray*}{c}
  \Lambda = \diag{\Lambda_1, \dots, \Lambda_\Meu},
\end{IEEEeqnarray*}
where $\Lambda_i = \sum_{j: (i, j) \in \Ical_\Scal} \ketbra{a_{i, j}}{a_{i, j}}$. Thus, we have a lower bound on the error probability, which yields an upper bound on the success probability. The second inequality comes from
\begin{equation*}
  \sum_{(i,j) \in \Ical_\Scal} p_i a_{i,j} \leq \sum_{i \in [1:\Meu], j \in [1:2^n]} p_i a_{i,j} = 1.
\end{equation*}
Furthermore, this can be achieved whenever $\brace{\Lambda_i}_{i \in [1:\Meu]}$ forms a POVM.

% === This helps Overleaf to understand how to collapse subsections === %

\subsection{Proof of \cref{thm:spectral_converse_tight}}
\label{sapp:spectral_converse_tight}

The goal is to find an operator $t \mu$ such that the projectors $\brace{\varphi_i - t \mu}_+$ are orthogonal and
$$\sum_{m \in [1:\Meu]} \brace{\varphi_m - t\mu}_+ + \brace{\varphi_m - t\mu}_0 \geq \Isf,$$
given that $\Lambda_i = \sum_{j: (i,j) \in \Ical_\Scal} \ketbra{a_{i,j}}{a_{i,j}}$ forms a POVM.

First, note that $\Lambda_i \Lambda_{i'} = 0$ if $i \neq i'$, because the cardinality of $\Ical_\Scal$ is $N$ and
$$\sum_{(i,j) \in \Ical_\Scal} \ketbra{a_{i,j}}{a_{i,j}} = \Isf \in \Lscr(\Hcal_N).$$
Let $c \eqdef \min_{(i,j) \in \Ical_\Scal} p_i a_{i,j}$ the smallest eigenvalue and 
$$\varepsilon \eqdef \min(p_i a_{i,j} | (i,j) \in \Ical_\Scal, p_i a_{i,j} \neq c) - c$$
the gap to the next value. Furthermore, we define the projectors 
\begin{align*}
  \Bsf_i^\bullet &\eqdef \sum_{\substack{j: (i,j) \in \Ical_\Scal \\ a_{i,j} > c}} \paren{a_{i,j} - \frac{\varepsilon}{2}} \cdot \ketbra{a_{i,j}}{a_{i,j}} \\
            \Bsf_i^\circ &\eqdef \sum_{\substack{j: (i,j) \in \Ical_\Scal \\ a_{i,j} = c}} \ketbra{a_{i,j}}{a_{i,j}}
\end{align*}
and the operator
$$t \mu \eqdef \sum_{i \in [1:\Meu]} \Bsf_i^\bullet + \Bsf_i^\circ.$$
Then $\Bsf_i^\bullet, \Bsf_i^\circ \leq \Lambda_i$ and
\begin{align*}
  \brace{\varphi_i - t \mu}_+ &= \Lambda_i - \Bsf_i^\circ \\
      \brace{\varphi_i - t \mu}_0 &\geq \Bsf_i^\circ,
\end{align*}
which satisfy both conditions of quasi-perfect codes.

% === This helps Overleaf to understand how to collapse subsections === %

\subsection{Proof of \cref{cor:depol_sc_tight}} \label{sapp:depol_sc_tight}

We will show that the depolarizing channel has little impact on both the spectrum and the eigenspaces.

First of all, note if $\varphi = \sum_{a \in \tn{spec}(\varphi)} a \cdot \ketbra{a}{a}$ is the spectral decomposition of a Hermitian operator $\varphi \in \Lscr(\Hcal)$ then for any $c \in \Rbb$
$$\varphi - c \cdot \Isf = \sum_{a \in \tn{spec}(\varphi)} (a - c) \cdot \ketbra{a}{a}.$$
Thus, the depolarizing channel does not change the order of the spectrum and does not change the eigenvectors, i.e., $\Ical_\Scal = \Ical_{\Ncal_p \circ \Scal}$.

Second, \cref{eq:tightness_depol_condition} is equivalent to the tightness of the spectral converse, the following condition must hold
$$\sum_{j: (i,j) \in \Ical_\Scal} \ketbra{a_{i,j}}{a_{i,j}} \cdot \sum_{j': (i',j') \in \Ical_\Scal} \ketbra{a_{i',j'}}{a_{i',j'}} = 0,$$
where $i \neq i'$. Since $\Ncal_p$ keeps the eigenspaces in place, this condition is satisfied by the channel outputs as well. Therefore, the spectral converse is tight for the outputs, and \cref{eq:tightness_depol} holds.

\section{Algorithms for Calculating Converses}
\label{app:complexities}
In every case, we will need to compute spectra, sometimes using eigenvectors for Hermitian matrices. Both with and without the eigenvectors, the computation takes $\Ocal(N^3)$ operations for a matrix of size $N \times N$. See \cite[Ch.~5]{Demmel97_1} for details. Furthermore, calculating the fidelity between two Hermitian operators requires at least one spectrum, thus has the same complexity $\Ocal(N^3)$.

\subsection{General Case\textemdash Without Assumptions} \label{sapp:complexity_general}

For \cref{cor:Qiu_fidelity_bound} and \cref{cor:Gemini_fidelity_bound}, the pairwise fidelities must be calculated separately, because there is no structure in the states that can be exploited. This means we have to make $\Ocal(N^3)$ operations $\Ocal(\Meu^2)$ times.

When calculating \cref{thm:Li_fidelity_bound}, one can precompute all eigen decompositions for $\Ocal(\Meu N^3)$ operations. However, there are four nested loops in the computation for a total of $\Ocal(\Meu^2 N^2)$ iterations, and in each of them, we must calculate an inner-product of length $N$ that has not been calculated before.

Last, \cref{thm:spectral_converse} requires the precomputation of the spectra for $\Ocal(\Meu N^3)$ operations. Then we have to find the $N$ largest values out of $\Meu N$ numbers, which can be done in $\Ocal(\Meu N \log_2 N)$ steps, which means that the precomputation dominates the execution.

% === This helps Overleaf to understand how to collapse subsections === %

\subsection{Symmetric Case\textemdash Assuming \cref{def:symmetric_set}} \label{sapp:complexity_symmetric}

Once again, \cref{cor:Qiu_fidelity_bound} and \cref{cor:Gemini_fidelity_bound} are mostly the same. There is a \hlgreen{pair-wise} precomputation of \hlblue{symbol fidelities} in $\Scal$ for $\Ocal(\hlgreen{q^2} \cdot \hlblue{q^3})$ steps, then for pairs of messages ($\Ocal(\Meu^2) = \Ocal(q^{2rn})$) we must compute the message fidelities ($\Ocal(n)$) based on the precomputed data. Here, we used the fact that the fidelity commutes with tensor products. This time the main iteration dominates with a complexity of $\Ocal(q^{2rn} n)$ if
$$2rn\log_q(n) \geq 5.$$

Similarly, for \cref{thm:Li_fidelity_bound}, we can exploit that the inner product commutes with the tensor product by definition. Thus, we can compute the spectral decompositions first ($\Ocal(q^4)$), then find all symbol inner products ($\Ocal(q^5)$). In the main loop, we must go through the pairs of messages ($\Ocal(q^{2rn})$) and each pair of eigenvectors ($\Ocal(q^{2n})$) to then calculate the minimum of eigenvalues and the message inner product ($\Ocal(n)$). The main loop dominates for
$$2 (r + 1) n \log_q(n) \geq 5.$$

Finally, \cref{thm:spectral_converse} requires a single spectrum to begin ($\Ocal(q^3)$), which should be sorted ($\Ocal(q \log_2(q))$) to get $\abm$. Then we want to calculate the vector
$$\vbm = \trans{\paren{\frac{a_2}{a_1}, \frac{a_3}{a_2}, \dots, \frac{a_q}{a_{q-1}}}}$$
and initialize the current 'error pattern' $\lbm= \trans{(n, 0, \cdots, 0)} \in \Nbb^q$, corresponding 'probability' $p = a_1^n$, cardinality $c = \binom{n}{n} = 1$, left over space to fill $L = q^{(1 - r) n} - c$ and accumulated probability $P = cp$. Then, for each iteration, we calculate
$$i = \argmax_{j \in [1:q-1]} \paren{v_j | l_j > 0}$$
and update the variables
\begin{align*}
  l_i &= l_i - 1 \\
  l_{i+1} &= l_{i+1} + 1 \\
  p &= p \cdot v_i \\
  c &= c \cdot \frac{l_i - 1}{l_{i+1}} \\
  L &= L - c \\
  P &= P + cp.
\end{align*}
Until we find $L < 0$, and we correct $P = P + Lp$. Each iteration may take up to $\Ocal(q)$ step,s and the breaking condition $L < 0$ must occur before we reach
$$\lbm= \trans{(0, \dots, 0, n)},$$
That is much less than $qn$ iterations.

% \ifthenelse{\boolean{JOURNAL}}{\input{}}{}

%%%%%% 
%% References:
%% We recommend the usage of BibTeX:
%% 

\bibliographystyle{IEEEtran}
\bibliography{defshort1, biblioHY}

\end{document}

%% file: figures/Repetition_n3_k1_q2/plot.tex
    \begin{tikzpicture}
        \begin{axis}[ 
                width = \linewidth, height = .5\linewidth,
                scale only axis,
                %title = {$3$ bit repetition code over $\Ncal_\gamma$ with input symbols $\ket{\pm}$},
                % --- X axis --- %
                xmin = 0, xmax = 0.5,  % xmax = 1,
                xlabel style = {font = \color{white!10!black}},
                xlabel = {Damping parameter $\gamma$},
                xmajorgrids, xtick distance = .1,
                % --- Y axis --- %
                %ymin = 0.00001, ymax = 1.02,
                yminorticks = true, 
                ymode = log,
                ylabel style = {font = \color{white!10!black}}, % yshift = -6mm can be added to ylabel style
                ylabel = {$P_{\tn{e}}$},
                ymajorgrids, % ytick distance = .2,
                % --- Both axes --- %
                tick label style={font=\scriptsize},
                % --- Legend --- %
                legend style = {draw = none, fill = white, 
                    font = \scriptsize, at = {(1, 0)}, anchor = south east, legend cell align = left, align = left, draw = white!20!black}
            ]

            % --- Error Probability --- %
            \addplot [color = BrightBlue, line width = .3mm] 
                table [ x index = 0,
                        y expr = \thisrowno{1} % <-- compute on the fly
                      ] {figures/Repetition_n3_k1_q2/P_err_optimal.txt};
            \addlegendentry{$P_{\textnormal{e}}\paren{\nicefrac{\Isf}{2}, (\Ncal_\gamma \circ \Scal)_{\Cscr_R}}$}

            % --- Converses --- %
            \addplot [dotted, color = BrightYellow, line width = .3mm, mark size=2.0pt, mark=triangle, mark options={solid, BrightYellow}, mark repeat=2, mark phase=2] 
                table [ x index = 0,
                        y expr = \thisrowno{1} % <-- compute on the fly
                      ] {figures/Repetition_n3_k1_q2/Conv_Qiu.txt};
            \addlegendentry{\cref{cor:Qiu_fidelity_bound}}
            
            \addplot [dashed, color = LightPurple, line width = .3mm] 
                table [ x index = 0,
                        y expr = \thisrowno{1} % <-- compute on the fly
                      ] {figures/Repetition_n3_k1_q2/Conv_Li.txt};
            \addlegendentry{\cref{thm:Li_fidelity_bound} \cite[Eq.~5]{Li16_1}}
            
            \addplot [dotted, color = DarkGreen, line width = .3mm, mark size=1.5pt, mark=o, mark options={solid, DarkGreen}, mark repeat=2] 
                table [ x index = 0,
                        y expr = \thisrowno{1} % <-- compute on the fly
                      ] {figures/Repetition_n3_k1_q2/Conv_Gemini.txt};
            \addlegendentry{\cref{cor:Gemini_fidelity_bound}}
            
            \addplot [dashdotted, color = BrightRed, line width = .3mm] 
                table [ x index = 0,
                        y expr = \thisrowno{1} % <-- compute on the fly
                      ] {figures/Repetition_n3_k1_q2/Conv_spectral.txt};
            \addlegendentry{\cref{thm:spectral_converse}}
            
        \end{axis}
    \end{tikzpicture}

%% file: figures/Repetition_n6_k1_q2/plot.tex
    \begin{tikzpicture}
        \begin{axis}[ 
                width = \linewidth, height = .5\linewidth,
                scale only axis,
                %title = {$6$ bit repetition code over $\Ncal_\gamma$ with input symbols $\ket{\pm}$},
                % --- X axis --- %
                xmin = 0, xmax = 0.5,  % xmax = 1,
                xlabel style = {font = \color{white!10!black}},
                xlabel = {Damping parameter $\gamma$},
                xmajorgrids, xtick distance = .1,
                % --- Y axis --- %
                %ymin = 0.000001, ymax = 1.02,
                yminorticks = true, 
                ymode = log,
                ylabel style = {font = \color{white!10!black}}, % yshift = -6mm can be added to ylabel style
                ylabel = {$P_{\tn{e}}$},
                ymajorgrids, % ytick distance = .2,
                                % --- Both axes --- %
                tick label style={font=\scriptsize},
                % --- Legend --- %
                legend style = {draw = none, fill = white, 
                    font = \scriptsize, at = {(1, 0)}, anchor = south east, legend cell align = left, align = left, draw = white!20!black}
            ]

            % --- Error Probability --- %
            \addplot [color = BrightBlue, line width = .3mm] 
                table [ x index = 0,
                        y expr = \thisrowno{1} % <-- compute on the fly
                      ] {figures/Repetition_n6_k1_q2/P_err_optimal.txt};
            \addlegendentry{$P_{\textnormal{e}}\paren{\nicefrac{\Isf}{2}, (\Ncal_\gamma \circ \Scal)_{\Cscr_R}}$}

            % --- Converses --- %
            \addplot [dotted, color = BrightYellow, line width = .3mm, mark size=2.0pt, mark=triangle, mark options={solid, BrightYellow}, mark repeat=2, mark phase=2] 
                table [ x index = 0,
                        y expr = \thisrowno{1} % <-- compute on the fly
                      ] {figures/Repetition_n6_k1_q2/Conv_Qiu.txt};
            \addlegendentry{\cref{cor:Qiu_fidelity_bound}}
            
            \addplot [dashed, color = LightPurple, line width = .3mm] 
                table [ x index = 0,
                        y expr = \thisrowno{1} % <-- compute on the fly
                      ] {figures/Repetition_n6_k1_q2/Conv_Li.txt};
            \addlegendentry{\cref{thm:Li_fidelity_bound} \cite[Eq.~5]{Li16_1}}
            
            \addplot [dotted, color = DarkGreen, line width = .3mm, mark size=1.5pt, mark=o, mark options={solid, DarkGreen}, mark repeat=2] 
                table [ x index = 0,
                        y expr = \thisrowno{1} % <-- compute on the fly
                      ] {figures/Repetition_n6_k1_q2/Conv_Gemini.txt};
            \addlegendentry{\cref{cor:Gemini_fidelity_bound}}
            
            \addplot [dashdotted, color = BrightRed, line width = .3mm] 
                table [ x index = 0,
                        y expr = \thisrowno{1} % <-- compute on the fly
                      ] {figures/Repetition_n6_k1_q2/Conv_spectral.txt};
            \addlegendentry{\cref{thm:spectral_converse}}
            
        \end{axis}
    \end{tikzpicture}

%% file: figures/SPC_n3_k2_q2/plot.tex
    \begin{tikzpicture}
        \begin{axis}[ 
                width = \linewidth, height = .5\linewidth,
                scale only axis,
                %title = {$3$ bit SPC code over $\Ncal_\gamma$ with input symbols $\ket{\pm}$},
                % --- X axis --- %
                xmin = 0, xmax = 0.5,  % xmax = 1,
                xlabel style = {font = \color{white!10!black}},
                xlabel = {Damping parameter $\gamma$},
                xmajorgrids, xtick distance = .1,
                % --- Y axis --- %
                %ymin = 0.0001, ymax = 1.02,
                yminorticks = true, 
                ymode = log,
                ylabel style = {font = \color{white!10!black}}, % yshift = -6mm can be added to ylabel style
                ylabel = {$P_{\tn{e}}$},
                ymajorgrids, % ytick distance = .2,
                % --- Both axes --- %
                tick label style={font=\scriptsize},
                % --- Legend --- %
                legend style = {draw = none, fill = white, 
                    font = \scriptsize, at = {(1, 0)}, anchor = south east, legend cell align = left, align = left, draw = white!20!black}
            ]

            % --- Error Probability --- %
            \addplot [color = BrightBlue, line width = .3mm] 
                table [ x index = 0,
                        y expr = \thisrowno{1} % <-- compute on the fly
                      ] {figures/SPC_n3_k2_q2/P_err_optimal.txt};
            \addlegendentry{$P_{\textnormal{e}}\paren{\nicefrac{\Isf}{4}, (\Ncal_\gamma \circ \Scal)_{\Cscr_{SPC}}}$}

            % --- Converses --- %
            \addplot [dotted, color = BrightYellow, line width = .3mm, mark size=2.0pt, mark=triangle, mark options={solid, BrightYellow}, mark repeat=2, mark phase=2] 
                table [ x index = 0,
                        y expr = \thisrowno{1} % <-- compute on the fly
                      ] {figures/SPC_n3_k2_q2/Conv_Qiu.txt};
            \addlegendentry{\cref{cor:Qiu_fidelity_bound}}
            
            \addplot [dashed, color = LightPurple, line width = .3mm] 
                table [ x index = 0,
                        y expr = \thisrowno{1} % <-- compute on the fly
                      ] {figures/SPC_n3_k2_q2/Conv_Li.txt};
            \addlegendentry{\cref{thm:Li_fidelity_bound} \cite[Eq.~5]{Li16_1}}
            
            \addplot [dotted, color = DarkGreen, line width = .3mm, mark size=1.5pt, mark=o, mark options={solid, DarkGreen}, mark repeat=2] 
                table [ x index = 0,
                        y expr = \thisrowno{1} % <-- compute on the fly
                      ] {figures/SPC_n3_k2_q2/Conv_Gemini.txt};
            \addlegendentry{\cref{cor:Gemini_fidelity_bound}}
            
            \addplot [dashdotted, color = BrightRed, line width = .3mm] 
                table [ x index = 0,
                        y expr = \thisrowno{1} % <-- compute on the fly
                      ] {figures/SPC_n3_k2_q2/Conv_spectral.txt};
            \addlegendentry{\cref{thm:spectral_converse}}
            
        \end{axis}
    \end{tikzpicture}

%% file: figures/SPC_n4_k3_q2/plot.tex
    \begin{tikzpicture}
        \begin{axis}[ 
                width = \linewidth, height = .5\linewidth,
                scale only axis,
                %title = {$4$ bit single SPC over $\Ncal_\gamma$ with input symbols $\ket{\pm}$},
                % --- X axis --- %
                xmin = 0, xmax = 0.5,  % xmax = 1,
                xlabel style = {font = \color{white!10!black}},
                xlabel = {Damping parameter $\gamma$},
                xmajorgrids, xtick distance = .1,
                % --- Y axis --- %
                %ymin = 0.0001, ymax = 1.02,
                yminorticks = true, 
                ymode = log,
                ylabel style = {font = \color{white!10!black}}, % yshift = -6mm can be added to ylabel style
                ylabel = {$P_{\tn{e}}$},
                ymajorgrids, % ytick distance = .2,
                % --- Legend --- %
                legend style = {draw = none, fill = white, 
                    font = \scriptsize, at = {(1, 0)}, anchor = south east, legend cell align = left, align = left, draw = white!20!black}
            ]

            % --- Error Probability --- %
            \addplot [color = BrightBlue, line width = .3mm] 
                table [ x index = 0,
                        y expr = \thisrowno{1} % <-- compute on the fly
                      ] {figures/SPC_n4_k3_q2/P_err_optimal.txt};
            \addlegendentry{$P_{\textnormal{e}}\paren{\nicefrac{\Isf}{8}, (\Ncal_\gamma \circ \Scal)_{\Cscr_{SPC}}}$}

            % --- Converses --- %
            \addplot [dotted, color = BrightYellow, line width = .3mm, mark size=2.0pt, mark=triangle, mark options={solid, BrightYellow}, mark repeat=2, mark phase=2] 
                table [ x index = 0,
                        y expr = \thisrowno{1} % <-- compute on the fly
                      ] {figures/SPC_n4_k3_q2/Conv_Qiu.txt};
            \addlegendentry{\cref{cor:Qiu_fidelity_bound}}
            
            \addplot [dashed, color = LightPurple, line width = .3mm] 
                table [ x index = 0,
                        y expr = \thisrowno{1} % <-- compute on the fly
                      ] {figures/SPC_n4_k3_q2/Conv_Li.txt};
            \addlegendentry{\cref{thm:Li_fidelity_bound} \cite[Eq.~5]{Li16_1}}
            
            \addplot [dotted, color = DarkGreen, line width = .3mm, mark size=1.5pt, mark=o, mark options={solid, DarkGreen}, mark repeat=2] 
                table [ x index = 0,
                        y expr = \thisrowno{1} % <-- compute on the fly
                      ] {figures/SPC_n4_k3_q2/Conv_Gemini.txt};
            \addlegendentry{\cref{cor:Gemini_fidelity_bound}}
            
            \addplot [dashdotted, color = BrightRed, line width = .3mm] 
                table [ x index = 0,
                        y expr = \thisrowno{1} % <-- compute on the fly
                      ] {figures/SPC_n4_k3_q2/Conv_spectral.txt};
            \addlegendentry{\cref{thm:spectral_converse}}
            
        \end{axis}
    \end{tikzpicture}

%% file: figures/SPC_n5_k4_q2/plot.tex
    \begin{tikzpicture}
        \begin{axis}[ 
                width = \linewidth, height = .5\linewidth,
                scale only axis,
                %title = {$5$ bit SPC code over $\Ncal_\gamma$ with input symbols $\ket{\pm}$},
                % --- X axis --- %
                xmin = 0, xmax = 0.5,  % xmax = 1,
                xlabel style = {font = \color{white!10!black}},
                xlabel = {Damping parameter $\gamma$},
                xmajorgrids, xtick distance = .1,
                % --- Y axis --- %
                %ymin = 0.0001, ymax = 1.02,
                yminorticks = true, 
                ymode = log,
                ylabel style = {font = \color{white!10!black}}, % yshift = -6mm can be added to ylabel style
                ylabel = {$P_{\tn{e}}$},
                ymajorgrids, % ytick distance = .2,
                % --- Legend --- %
                legend style = {draw = none, fill = white, 
                    font = \scriptsize, at = {(1, 0)}, anchor = south east, legend cell align = left, align = left, draw = white!20!black}
            ]

            % --- Error Probability --- %
            \addplot [color = BrightBlue, line width = .3mm] 
                table [ x index = 0,
                        y expr = \thisrowno{1} % <-- compute on the fly
                      ] {figures/SPC_n5_k4_q2/P_err_optimal.txt};
            \addlegendentry{$P_{\textnormal{e}}\paren{\nicefrac{\Isf}{16}, (\Ncal_\gamma \circ \Scal)_{\Cscr_{SPC}}}$}

            % --- Converses --- %
            \addplot [dotted, color = BrightYellow, line width = .3mm, mark size=2.0pt, mark=triangle, mark options={solid, BrightYellow}, mark repeat=2, mark phase=2] 
                table [ x index = 0,
                        y expr = \thisrowno{1} % <-- compute on the fly
                      ] {figures/SPC_n5_k4_q2/Conv_Qiu.txt};
            \addlegendentry{\cref{cor:Qiu_fidelity_bound}}
            
            \addplot [dashed, color = LightPurple, line width = .3mm] 
                table [ x index = 0,
                        y expr = \thisrowno{1} % <-- compute on the fly
                      ] {figures/SPC_n5_k4_q2/Conv_Li.txt};
            \addlegendentry{\cref{thm:Li_fidelity_bound} \cite[Eq.~5]{Li16_1}}
            
            \addplot [dotted, color = DarkGreen, line width = .3mm, mark size=1.5pt, mark=o, mark options={solid, DarkGreen}, mark repeat=2] 
                table [ x index = 0,
                        y expr = \thisrowno{1} % <-- compute on the fly
                      ] {figures/SPC_n5_k4_q2/Conv_Gemini.txt};
            \addlegendentry{\cref{cor:Gemini_fidelity_bound}}
            
            \addplot [dashdotted, color = BrightRed, line width = .3mm] 
                table [ x index = 0,
                        y expr = \thisrowno{1} % <-- compute on the fly
                      ] {figures/SPC_n5_k4_q2/Conv_spectral.txt};
            \addlegendentry{\cref{thm:spectral_converse}}
            
        \end{axis}
    \end{tikzpicture}

%% file: figures/Hamming_n6_k3_q2/plot.tex
    \begin{tikzpicture}
        \begin{axis}[ 
                width = \linewidth, height = .5\linewidth,
                scale only axis,
                %title = {Punctured Hamming code over $\Ncal_\gamma$ with input symbols $\ket{\pm}$},
                % --- X axis --- %
                xmin = 0, xmax = 0.5,  % xmax = 1,
                xlabel style = {font = \color{white!10!black}},
                xlabel = {Damping parameter $\gamma$},
                xmajorgrids, xtick distance = .1,
                % --- Y axis --- %
                %ymin = 0.00001, ymax = 1.02,
                yminorticks = true, 
                ymode = log,
                ylabel style = {font = \color{white!10!black}}, % yshift = -6mm can be added to ylabel style
                ylabel = {$P_{\tn{e}}$},
                ymajorgrids, % ytick distance = .2,
                % --- Both axes --- %
                tick label style={font=\scriptsize},
                % --- Legend --- %
                legend style = {draw = none, fill = white, 
                    font = \scriptsize, at = {(1, 0)}, anchor = south east, legend cell align = left, align = left, draw = white!20!black}
            ]

            % --- Error Probability --- %
            \addplot [color = BrightBlue, line width = .3mm] 
                table [ x index = 0,
                        y expr = \thisrowno{1} % <-- compute on the fly
                      ] {figures/Hamming_n6_k3_q2/P_err_optimal.txt};
            \addlegendentry{$P_{\textnormal{e}}\paren{\nicefrac{\Isf}{8}, (\Ncal_\gamma \circ \Scal)_{\Cscr_H}}$}

            % --- Converses --- %
            \addplot [dotted, color = BrightYellow, line width = .3mm, mark size=2.0pt, mark=triangle, mark options={solid, BrightYellow}, mark repeat=2, mark phase=2] 
                table [ x index = 0,
                        y expr = \thisrowno{1} % <-- compute on the fly
                      ] {figures/Hamming_n6_k3_q2/Conv_Qiu.txt};
            \addlegendentry{\cref{cor:Qiu_fidelity_bound}}
            
            \addplot [dashed, color = LightPurple, line width = .3mm] 
                table [ x index = 0,
                        y expr = \thisrowno{1} % <-- compute on the fly
                      ] {figures/Hamming_n6_k3_q2/Conv_Li.txt};
            \addlegendentry{\cref{thm:Li_fidelity_bound} \cite[Eq.~5]{Li16_1}}
            
            \addplot [dotted, color = DarkGreen, line width = .3mm, mark size=1.5pt, mark=o, mark options={solid, DarkGreen}, mark repeat=2] 
                table [ x index = 0,
                        y expr = \thisrowno{1} % <-- compute on the fly
                      ] {figures/Hamming_n6_k3_q2/Conv_Gemini.txt};
            \addlegendentry{\cref{cor:Gemini_fidelity_bound}}
            
            \addplot [dashdotted, color = BrightRed, line width = .3mm] 
                table [ x index = 0,
                        y expr = \thisrowno{1} % <-- compute on the fly
                      ] {figures/Hamming_n6_k3_q2/Conv_spectral.txt};
            \addlegendentry{\cref{thm:spectral_converse}}
            
        \end{axis}
    \end{tikzpicture}